\documentclass
[11pt,a4paper,english]
{article} 
\usepackage{import}
\usepackage[margin=1in]{geometry}
\usepackage[utf8]{inputenc}
\usepackage{babel}                 
\usepackage{float}   
\usepackage{url}     
\usepackage{sectsty}               
\usepackage{tabularx}              
\usepackage{titling}               
\usepackage{imakeidx}              
\usepackage{xcolor}                
\usepackage{enumitem}              
\usepackage{amsmath}                                    
\usepackage{amssymb}                                    
\usepackage{amsthm}
\usepackage[Gray,squaren,thinqspace,thinspace]{SIunits} 
\usepackage{listings}                                   
\usepackage[small,bf,hang]{caption}        
\usepackage{subcaption}                    
\usepackage{sidecap}                       
\usepackage[colorlinks=true,citecolor=red,linkcolor=blue]{hyperref} 
\usepackage[capitalise,noabbrev]{cleveref}
\usepackage{url}                           
\usepackage{cite}                          
\usepackage{tikz}          
\usepackage{pgfplots}      
\usepackage{pgfplotstable} 
\usepackage{placeins}      
\usepackage{graphicx}      
\usepackage{longtable}     
\usepackage{mathrsfs}
\pgfplotsset{compat=1.18} 
\usepackage{tikz-cd}
\usepackage{braket}
\usetikzlibrary{positioning}
\usetikzlibrary{decorations.pathmorphing}
\usetikzlibrary{knots}
\usetikzlibrary{calc}
\usepackage[export]{adjustbox}



\theoremstyle{plain}
\newtheorem{thm}{Theorem}[section]

\newtheorem{lem}[thm]{Lemma}
\newtheorem{cor}[thm]{Corollary}
\newtheorem*{thm*}{Theorem}

\theoremstyle{definition} 
\newtheorem{ex}{Example}
\newtheorem{defn}[thm]{Definition}
\newtheorem{rmk}{Remark}
\newtheorem{notation}{Notation}

\Crefname{thm}{Theorem}{Theorems}
\Crefname{prop}{Proposition}{Propositions}
\Crefname{lem}{Lemma}{Lemmas}
\Crefname{cor}{Corollary}{Corollaries}
\Crefname{defn}{Definition}{Definitions}
\Crefname{tab}{Table}{Tables}
\Crefname{ex}{Example}{Examples}
\Crefname{chap}{Champter}{Chapters}
\Crefname{app}{Appendix}{Appendices}

\newcommand{\TwoFHilb}{\mathrm{2FHilb}}

\renewcommand{\ket}[1]{| #1 \rangle}

\newcommand{\ra}{\rightarrow}
\newcommand{\End}[1]{\mathrm{End}\br{#1}}
\newcommand{\br}[1]{\left(#1\right)}                               

\newcommand{\ot}{\otimes}

\newcommand{\id}{\mathrm{id}}

\newcommand{\Ftilde}{\widetilde{\mathcal{F}}}
\newcommand{\Stilde}{{\mathcal{S}}}
\newcommand{\Tr}{\mathrm{Tr}}
\newcommand{\Hout}{H_{\mathrm{out}}}
\newcommand{\Hin}{H_{\mathrm{in}}}
\newcommand{\Kout}{K_{\mathrm{out}}}
\newcommand{\Kin}{K_{\mathrm{in}}}

\usepackage{authblk}

\title{Supermaps between channels of any type}
\author[1,2]{Robert Allen\thanks{r.allen@bristol.ac.uk}}
\author[1]{Dominic Verdon}
\affil[1]{School of Mathematics, University of Bristol, UK}
\affil[2]{Heilbronn Institute for Mathematical Research}

\begin{document}

\maketitle

\begin{abstract}
    Supermaps between quantum channels (completely positive trace-preserving (CPTP) maps of matrix algebras) were introduced in [Chiribella \emph{et al.}, EPL 83(3) (2008)]. 
    In this work we generalise to supermaps between channels of any type; by \emph{channels} we mean CPTP maps of finite-dimensional $C^*$-algebras. Channels include POVMs, quantum instruments, classically controlled families of quantum channels, classical channels, quantum multimeters, and more. We show that deterministic supermaps between channels of any type can be realised using simple circuits, recovering the previous realisation theorems of [Chiribella \emph{et al.}, EPL 83(3) (2008)] (for deterministic supermaps between quantum channels) and [Bluhm \emph{et al.} (2024)] (for deterministic supermaps between quantum multimeters) as special cases. To prove this realisation theorem we use the graphical calculus of the 2-category of finite-dimensional 2-Hilbert spaces; the paper includes an accessible and elementary introduction to this graphical calculus, and no prior knowledge of category theory is expected of the reader. 
\end{abstract}

\section{Introduction}

\subsection{Some key concepts}

\paragraph{Channels.} In this work we will consider operations that take in a classical input and a quantum input, and produce a classical output and a quantum output. These operations, called \emph{channels}, include POVMs, quantum instruments, classically controlled families of quantum channels, classical channels, quantum multimeters, and more.

Mathematically, we define a channel to be a completely positive trace-preserving (CPTP) map) $\mathcal{N}: A \to B$ between finite-dimensional $C^*$-algebras $A,B$. To understand how the mathematical definition relates to the operational definition, recall that every finite-dimensional $C^*$-algebra is a multimatrix algebra, i.e. a direct sum of matrix algebras. We therefore have isomorphisms 
\begin{align*}
A \cong \bigoplus_{x \in X} B(H_x)
&&
B \cong \bigoplus_{y \in Y} B(K_y)
\end{align*}
where $X,Y$ are finite sets, $\{H_x~|~x \in X\},\{K_y~|~y \in Y\}$ are indexed families of finite-dimensional Hilbert spaces, and $B(H_x)$ is the algebra of operators (i.e. the matrix algebra) on the Hilbert space $H_x$.  

A \emph{state} (trace-1 positive element) of the algebra $A$ is precisely determined by:
\begin{itemize}
\item A probability distribution $\{p_x\}_{x \in X}$ over the finite set $X$.
\item For each $x \in X$, a quantum state (i.e. a density matrix) $\rho_x \in B(H_x)$.
\end{itemize}
Intuitively, such a state can be thought of as the result of a quantum measurement. The outcome $x \in X$ is measured with some probability, giving the distribution $\{p_x\}$, and the quantum state $\rho_x \in B(H_x)$ is the state resulting from the measurement if the outcome $x$ is obtained. Note that there is here no requirement for the Hilbert spaces $\{H_x\}_{x \in X}$ to be identical.  

A channel $\mathcal{N}: A \to B$ can be interpreted as follows. The input to the channel is a state $\rho \in A$; namely, a probability distribution $\{p_x\}_{x \in X}$ over classical information from the set $x \in X$, which should be interpreted as a \emph{classical control} for the channel, and associated quantum states $\rho_x \in B(H_x)$. If the classical control $x \in X$ is received, a measurement on the associated state $\rho_x$ is performed, with outcome set $Y$. If the outcome $y \in Y$ occurs, the quantum state resulting from the measurement lies in $B(K_y)$. Altogether, the state of the algebra $B$ resulting from this operation is precisely $\mathcal{N}(\rho) \in B$.

Here are some simple examples:
\begin{itemize}
   \item For any finite-dimensional $C^*$-algebra $A$, a channel of type $\mathbb{C} \to A$ is precisely a state of $A$.
   \item A channel of type $B(H) \to \oplus_{x \in X} \mathbb{C}$ is precisely a \emph{positive operator-valued measurement (POVM)} with outcome set $X$. More generally, a channel of type $B(H) \to \oplus_{x \in X} B(K)$ is a \emph{(finite) quantum instrument}, as defined in~\cite{Gud23}.
   \item A channel of type $B(H) \to B(K)$ is precisely a \emph{quantum channel} (CPTP map) between the Hilbert spaces $H,K$, as ordinarily defined. More generally, a channel of type $\oplus_{x \in X} B(H) \to B(K)$ is a classically controlled family of quantum channels between the Hilbert spaces $H,K$.
   \item A channel of type $\oplus_{x \in X} \mathbb{C} \to \oplus_{y \in Y} \mathbb{C}$ is precisely a $|X| \times |Y|$ stochastic matrix. This was described as a `finite discrete memoryless' classical channel in~\cite{Shannon1956}; here we will just call it a \emph{classical channel}.
\end{itemize}

\paragraph{Supermaps.} In the seminal paper~\cite{CDP08} transformations between quantum channels were considered, there called \emph{quantum supermaps}. By `between quantum channels', we mean that that paper focused only on supermaps which map channels of type $B(H_{\mathrm{in}}) \to B(H_{\mathrm{out}})$ to channels of type $B(K_{\mathrm{in}}) \to B(K_{\mathrm{out}})$, where $H_{\mathrm{in}},H_{\mathrm{out}},K_{\mathrm{in}},K_{\mathrm{out}}$ are finite-dimensional Hilbert spaces. Briefly, the results of~\cite{CDP08} can be summarised as follows:
\begin{enumerate}
    \item Based on physical considerations, an abstract mathematical definition of a supermap between quantum channels was presented.
    \item It was shown that any such supermap can be realised operationally using a simple circuit~\cite[Thm. 1]{CDP08}; we will call such a result a `realisation theorem'.
\end{enumerate}
This result has had applications in many areas where quantum channels are transformed or compared, such as cloning \cite{BDPS14}, storage and retrieval of quantum channels \cite{SBZ19}, quantum resource theories \cite{CG19} and error correction \cite{WLWL24}. One key application has been to use the realisation theorem together with semidefinite programming to optimise protocols such as processing reversible quantum channels (including cloning) \cite{BDPS14}, isometry adjointation \cite{YSM24}, computation of quantum correlations \cite{TPKBA24}, and more.

As we have said, the paper~\cite{CDP08} only considered transformations between quantum channels, i.e. from channels of type $B(H_{\mathrm{in}}) \to B(H_{\mathrm{out}})$ to channels of type $B(K_{\mathrm{in}}) \to B(K_{\mathrm{out}})$. The results therefore do not apply to channels of any type. This leaves us unable to answer simple questions about supermaps between other types of channel, such as:
\begin{quote}
What is the most general physical transformation which turns a measurement (POVM) on a given Hilbert space into a state of some other Hilbert space, and how can it be realised?
\end{quote}
The need for generalisation was also highlighted by recent work on transformations between \emph{quantum multimeters}~\cite{Bluhm2024}. These multimeters are measurement apparata which take in a quantum state together with a classical control; the classical control determines the POVM to be performed on the state. In fact, a quantum  multimeter is a channel 
$$
\bigoplus_{x \in X} B(H) \to \mathbb{C}^{\oplus n},
$$
where $H$ is the Hilbert space of the state to be measured, $n$ is the number of possible measurement outcomes and $X$ is the index set for the classical control. Using an \emph{ad hoc} argument the authors of~\cite{Bluhm2024} were able to prove a realisation theorem for supermaps between quantum multimeters.

\subsection{Our results}

\paragraph{A realisation theorem for supermaps between channels of any type.} This raises the question of whether there is a realisation theorem which holds for supermaps between channels of any type,  recovering the known results from~\cite{CDP08,Bluhm2024} as special cases. In this paper we prove such a theorem.

Similarly to both those works, the basis for our definition of supermaps between channels of any type is a generalisation of Choi's theorem. This result (which we will recall more formally in Section~\ref{sec:stinespringchoi}) says that
\begin{quote}
for finite-dimensional $C^*$-algebras $A,B$, there is a bijection (in fact, an isomorphism of convex cones) between completely positive (CP) maps $A \to B$ and positive elements of a certain finite-dimensional $C^*$-algebra which we will call $\underline{\mathrm{Hom}}(A,B)$. This bijection exchanges channels (i.e. trace-preserving CP maps) with positive elements obeying a certain partial trace condition which we will call (TP). 
\end{quote}
A formal, mathematical definition of a supermap between channels of any type can then be given as follows.
\begin{defn}\label{def:detsupermap}
    Let $A,B,C,D$ be f.d. $C^*$-algebras. A \emph{deterministic supermap} from channels of type $A \to B$  to channels of type $C \to D$ is a completely positive map $\underline{\mathrm{Hom}}(A,B) \to \underline{\mathrm{Hom}}(C,D)$ which maps positive elements obeying (TP) to positive elements obeying (TP). 
\end{defn}
\noindent
The justification for this definition is exactly as given in the introduction of~\cite{CDP08}.
A deterministic supermap might also be called a \emph{completely positive trace-preserving-preserving} map. 

The main result of this paper is the following realisation theorem for these deterministic supermaps. 
\begin{thm}\label{thm:mainintro}
Consider the finite-dimensional $C^*$-algebras
\begin{align*}
&A := \bigoplus_{i \in I} B(H_{\mathrm{in},i}) &&
&B := \bigoplus_{j \in J} B(H_{\mathrm{out},j}) \\
&C := \bigoplus_{k \in K} B(K_{\mathrm{in},k})
&&&D := \bigoplus_{l \in L} B(K_{out,l}),
\end{align*}
where $\{H_{\mathrm{in},i}\}_{i \in I}$, $\{H_{\mathrm{out},j}\}_{j \in J}$, $\{K_{\mathrm{in},k}\}_{k \in K}$, $\{K_{\mathrm{out},l}\}_{l \in L}$ are finite sets of f.d. Hilbert spaces. 

Let $\mathcal{S}$ be any deterministic supermap which transforms channels of type $A \to B$ to channels of type $C \to D$. Then there exists
\begin{itemize}
    \item a Hilbert space $P$  of dimension at most $$\max_{i \in I, k \in K}\left(\dim(\Hin{}_{,i})\dim(\Kin{}_{,k})\right),$$
    \item a channel 
$$\mathcal{E}: C \to A \otimes B(P),$$
\item and a channel 
$$
\mathcal{G}: \bigoplus_{\substack{i \in I\\ k \in K}} B \otimes B(P) \to D,
$$
\end{itemize}
such that the supermap may be realised operationally as follows. Let $\mathcal{F}: A \to B$ be any channel. Then $\mathcal{S}(\mathcal{F}): C \to D$ is described by the following circuit:
\begin{align}\label{eq:realisationcircuit}
\includegraphics[valign=c,scale=1.3]{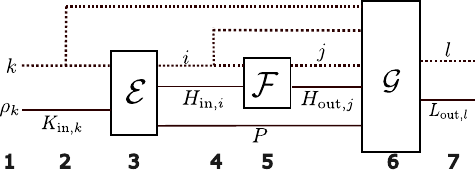}
\end{align}
\end{thm}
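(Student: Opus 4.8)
The plan is to adapt the argument of~\cite{CDP08} --- dilate the supermap, then exploit its defining constraint to extract a ``causal'' (pre-processing / post-processing) structure --- but to carry this out inside the graphical calculus of finite-dimensional 2-Hilbert spaces, so that nothing depends on whether $A,B,C,D$ are matrix algebras or genuine multimatrix algebras. I will freely use the generalised Choi and Stinespring theorems of Section~\ref{sec:stinespringchoi}. The first step is to note that $\mathcal{S}$ is by definition a CP map between the finite-dimensional $C^*$-algebras $\underline{\mathrm{Hom}}(A,B)$ and $\underline{\mathrm{Hom}}(C,D)$, so applying the generalised Choi theorem ``one level up'' it corresponds to a positive element of $\underline{\mathrm{Hom}}(\underline{\mathrm{Hom}}(A,B),\underline{\mathrm{Hom}}(C,D))$. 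I would record here that, since a CP map out of a direct sum is a matrix of CP maps, $\underline{\mathrm{Hom}}(A,B)$ decomposes along the block structure as a direct sum over $(i,j) \in I \times J$ of copies of $\underline{\mathrm{Hom}}(B(H_{\mathrm{in},i}),B(H_{\mathrm{out},j}))$, and similarly for $\underline{\mathrm{Hom}}(C,D)$, while (TP) becomes, for each fixed $i$ (resp.\ $k$), a single partial-trace constraint coupling the blocks carrying that first index. Translating ``$\mathcal{S}$ maps (TP) to (TP)'' through the Choi correspondence then turns into a hierarchy of linear normalisation conditions on the element representing $\mathcal{S}$ --- the multimatrix analogue of the causality/normalisation conditions underlying the realisation theorem of~\cite{CDP08}.

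\textbf{Dilating and decomposing.} Next I would apply the generalised Stinespring theorem to the CP map $\mathcal{S}$; graphically this produces a ``pure'' diagram for $\mathcal{S}$ carrying a single environment wire. The heart of the proof is to use the normalisation conditions above to bring this diagram into a causal normal form: reading it from the $C$-input to the $D$-output, it must split into a pre-processing part emitting an $A$-wire together with a residual memory wire $P$, then a slot into which an arbitrary channel $\mathcal{F}\colon A \to B$ is inserted, then a post-processing part consuming the resulting $B$-wire and the memory wire and producing the $D$-output. In the graphical calculus this is a sequence of essentially standard moves --- bending wires through dual objects, invoking the structural (co)unit and Frobenius-type relations of the calculus, and splitting the Stinespring isometry --- and crucially none of these moves is sensitive to the number of direct summands, which is exactly what makes the argument work for channels of arbitrary type. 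The classical labels $i \in I$ (emitted by the pre-processing) and $k \in K$ (read off the $C$-input) may be copied around the $\mathcal{F}$-slot, which is why the post-processing channel has domain $\bigoplus_{i \in I, k \in K} B \otimes B(P)$. Finally, tracking dimensions through the dilation shows that the memory wire in sector $(i,k)$ embeds into $\overline{H}_{\mathrm{in},i} \otimes K_{\mathrm{in},k}$, so that taking $P$ to be the largest such space (padding the smaller sectors) yields the claimed bound $\dim(P) \le \max_{i,k}\dim(H_{\mathrm{in},i})\dim(K_{\mathrm{in},k})$.

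\textbf{Reading off the circuit.} I would then define $\mathcal{E}\colon C \to A \otimes B(P)$ to be the pre-processing part and $\mathcal{G}\colon \bigoplus_{i,k} B \otimes B(P) \to D$ the post-processing part extracted above. That $\mathcal{E}$ and $\mathcal{G}$ are channels follows from the normalisation conditions --- graphically, from the fact that the appropriate composites of their diagrams equal the counit of the relevant Frobenius-type structure --- and, by construction, inserting any channel $\mathcal{F}$ into the slot produces exactly $\mathcal{S}(\mathcal{F})$, which is precisely the content of the circuit~\eqref{eq:realisationcircuit}. Specialising to the case where $I,J,K,L$ are singletons recovers~\cite[Thm.~1]{CDP08}, and specialising to the multimeter types recovers the realisation theorem of~\cite{Bluhm2024}.

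\textbf{The main obstacle.} I expect the real difficulty to lie in the causal decomposition of the middle step. In~\cite{CDP08} this is carried out by explicit manipulations of Choi operators of matrix algebras; here it must instead be performed uniformly over all multimatrix types, with the classical-control bookkeeping --- the copying of the labels $i$ and $k$, and the direct sum in the domain of $\mathcal{G}$ --- handled correctly. Taming exactly this kind of bookkeeping is what the 2-Hilbert space graphical calculus is designed for, but identifying the right diagrammatic normal form for a pure supermap, and verifying the moves that reach it, is where the substantive work will be.
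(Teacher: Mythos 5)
Your overall strategy coincides with the paper's: dilate the supermap via the generalised Stinespring theorem and transplant the argument of \cite{CDP08} into the graphical calculus of $\TwoFHilb$, so that the multimatrix bookkeeping (the labels $i,k$, the direct sum in the domain of $\mathcal{G}$, the padding of the memory spaces to a single $P$) is carried by the planar regions. Your block decomposition of $\underline{\mathrm{Hom}}(A,B)$ and your account of the dimension bound are also consistent with what the paper does. However, the step you yourself flag as ``the main obstacle'' --- extracting the causal pre-processing/slot/post-processing normal form from the dilated supermap --- is exactly the mathematical content of the theorem, and your proposal does not supply the idea that makes it work. The mechanism you gesture at (``bending wires through dual objects, invoking the structural (co)unit and Frobenius-type relations, and splitting the Stinespring isometry'') cannot by itself produce the factorisation: a generic isometric dilation of a CP map between the two Choi algebras admits no such causal splitting, and the (TP)-preservation hypothesis must enter in an essential, quantitative way, not merely as ``a hierarchy of linear normalisation conditions.''

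Concretely, the paper's route is a chain of three lemmas followed by a dilation-comparison argument, and each link is needed. First, a normalisation lemma (Lemma~\ref{lem:1}): any $C$ with $\Tr[C\circ\widetilde{\mathcal{F}}]=1$ for all (TP) Choi operators $\widetilde{\mathcal{F}}$ must have the form $\id_{\Kout}\boxtimes\rho$; this is proved by a direct linear-algebra perturbation of a reference (TP) element, not by diagrammatic rewriting. Second, applying this to the \emph{dual} supermap $\widetilde{\mathcal{S}}_*$ produces a unital CP map $\mathcal{N}:\End{\widehat{\Hin}}\to\End{\widehat{\Kin}}$ with $\widetilde{\mathcal{S}}_*(\id_{\Kout}\boxtimes\rho)=\id_{\Hout}\boxtimes\mathcal{N}_*(\rho)$ (Lemma~\ref{lem:2}). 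Third, this dualises to the intertwining relation $\Tr_{\Kout}[\widetilde{\mathcal{S}}(C)]=\mathcal{N}(\Tr_{\Hout}[C])$ (Lemma~\ref{lem:3}). The causal factorisation then comes from comparing the two Stinespring dilations of this single CP map: the dilation built from a \emph{minimal} representation of $\mathcal{N}$ is itself minimal, so by the uniqueness clause of Theorem~\ref{thm:stinespring} the dilation built from $S$ factors through it via an isometry $W$ on the environment; it is this $W$, together with the minimal dilation $N$ of $\mathcal{N}$, that become $\mathcal{G}$ and $\mathcal{E}$ after the wire-bending and cusp manipulations. Your sketch omits the dual supermap, the intertwining relation, and the minimality comparison entirely, so as written it asserts the normal form rather than deriving it. To complete the proof you would need to supply this chain (or an equivalent substitute for it).
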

\noindent
The circuit should be read from left to right; we describe it in words now. 
\begin{enumerate}
    \item The input to the channel $\mathcal{S}(\mathcal{F})$ is a state of type $C$: a probability distribution over classical data $k \in K$ and associated quantum states $\rho_k \in B(K_{\mathrm{in},k})$.
    \item The classical data $k \in K$ is copied.
    \item One copy of the classical data $k \in K$, together with the associated state $\rho_k$, are given as input to the channel $\mathcal{E}$.
    \item The measurement outcome $i \in I$ resulting from the channel $\mathcal{E}$ is copied. 
    \item One copy of $i \in I$, together with the associated state of $B(H_{\mathrm{in},i})$ output by the channel $\mathcal{E}$, is given as input to the channel $\mathcal{F}$.
    \item All the classical measurement outcomes $i \in I, j\in J, k\in K$, together with the state of $B(H_{\mathrm{out},j}) \otimes B(P)$ output by the previous channels, are given as input to the channel $\mathcal{G}$.
    \item The output is a state of type $D$: a probability distribution over classical measurement outcomes $l \in L$ and associated quantum states $\sigma_l \in B(K_{\mathrm{out},l})$.
\end{enumerate}
To illustrate the theorem we will now give some examples. We will begin by recovering the results of~\cite{CDP08,Bluhm2024}.

\begin{ex}[Quantum channels to quantum channels]
We begin by recovering the realisation theorem of~\cite{CDP08}. Let 
\begin{align*}
A:= B(\Hin) && B:=B(\Hout) && C:=B(\Kin) && D:= B(\Kout)
\end{align*}
In this case all the classical information is trivial and the circuit reduces to
$$
\includegraphics[valign=c,scale=1.3]{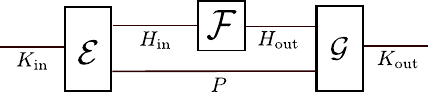}
$$
which is precisely the circuit in~\cite[Fig. 1]{CDP08}, where the ancilla output by $\mathcal{G}$ is discarded.
\end{ex}

\begin{ex}[Multimeters to multimeters]
We now recover the realisation theorem of~\cite{Bluhm2024}. Let
\begin{align*}
A:= \bigoplus_{i \in I} B(H) 
&&
B:= \bigoplus_{j \in J} \mathbb{C}
&&
C:= \bigoplus_{k \in K} B(K)
&&
D:= \bigoplus_{l \in L} \mathbb{C}
\end{align*}
In this case the circuit reduces to
$$
\includegraphics[valign=c,scale=1.3]{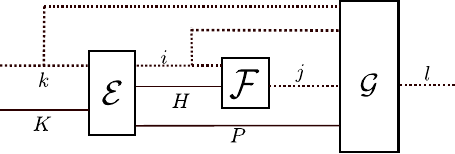}
$$
which is precisely the circuit in~\cite[Fig. 5]{Bluhm2024}. 
\end{ex}

\begin{ex}[POVMs to quantum states] We will now answer the question posed earlier in the introduction: what is the most general supermap which maps POVMs to quantum states? Let 
\begin{align*}
A:= B(H) 
&&
B:=\bigoplus_{j \in J} \mathbb{C}
&&
C:=\mathbb{C}
&&
D:=B(K)
\end{align*}
The circuit then reduces to
$$
\includegraphics[valign=c,scale=1.3]{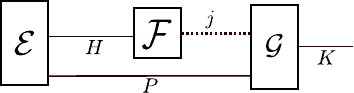}
$$
and we see that the most general way to transform POVMs into quantum states is to create a bipartite quantum state, perform the POVM on one of the parts, and use the outcome of the POVM to control a quantum channel on the remainder. 
\end{ex}

\begin{ex}[Quantum states to POVMs]
Going the other way, one might ask what is the most general way to transform a quantum state into a POVM. Let:
\begin{align*}
A:= \mathbb{C}
&&
B:= B(H)
&&
C:=B(K)
&&
D:=\bigoplus_{l \in L} \mathbb{C}
\end{align*}
The circuit then reduces to
$$
\includegraphics[valign=c,scale=1.3]{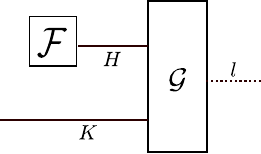}
$$
and we see that the most general way to transform a quantum state into a POVM is to perform a bipartite POVM on the state together with the system to be measured.
\end{ex}
\begin{ex}[Classical channels to quantum channels]
What is the most general way to transform a classical channel into a quantum channel?
Let:
\begin{align*}
A:=\bigoplus_{i \in I} \mathbb{C}
&&
B:=\bigoplus_{j \in J} \mathbb{C}
&&
C:=B(H)
&&
D:=B(K)
\end{align*}
The circuit then reduces to
$$
\includegraphics[valign=c,scale=1.3]{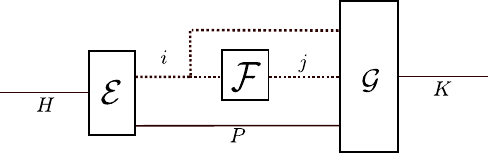}
$$
and we see that the most general way to transform a classical channel into a quantum channel is to perform a quantum instrument whose quantum output lies in some Hilbert space $P$ of dimension at most $\dim(H)$, copy the measurement outcome, use the classical channel to postprocess one copy, and then use the measurement outcome and its postprocessed counterpart as controls on some quantum channel $B(P) \to B(K)$. 
\end{ex}

\begin{ex}[Quantum channels to classical channels]
What is the most general way to transform a quantum channel into a classical channel? Let
\begin{align*}
A:= B(H)
&&
B:=B(K)
&&
C:=\bigoplus_{k \in K} \mathbb{C}
&&
D:=\bigoplus_{l \in L} \mathbb{C}
\end{align*}
The circuit then reduces to
$$
\includegraphics[valign=c,scale=1.3]{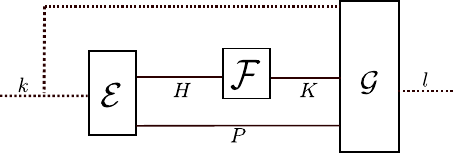}
$$
and we see that the most general way to transform a quantum channel into a classical channel is to copy the classical input, use one copy to control a bipartite quantum state initialisation, put one part of the initialised bipartite state through the channel, and then perform a POVM controlled by the remaining copy of the classical input.
\end{ex}
\noindent
We leave the entertainment of discovering further examples to the reader.
\paragraph{Future applications.} As we discussed already, supermaps between quantum channels have found applications in many areas of mathematics and physics. Possessing a realisation theorem for deterministic supermaps between channels of any type allows us to extend these applications. In particular, any protocol transforming, comparing, approximating, programming or simulating a channel of any type can be decomposed and optimised analogously to the case of quantum channels using a semidefinite programming approach. We leave these applications for future work. 

\subsection{The structure of the the paper}
The remainder of this paper will be devoted to the proof of Theorem~\ref{thm:mainintro}. 

Our strategy for proving this is quite straightforward. We take the original proof for supermaps between quantum channels given in~\cite{CDP08}; we formulate it using the diagrammatic calculus of the 2-category $\TwoFHilb$ of finite-dimensional 2-Hilbert spaces; and thereby generalise it to supermaps between channels of any type. We propose this as a broadly applicable method of turning results about quantum channels into results about channels of any type; for another example of a similar approach see~\cite{Verdon2024}, which used $\TwoFHilb$ to generalise quantum teleportation and dense coding protocols.

In Section~\ref{sec:background} we give a short and self-contained introduction to the diagrammatic calculus of $\TwoFHilb$, and a review of the diagrammatic representation of the generalised Stinespring's and Choi's theorems.

In Section~\ref{sec:proof} we present the proof of Theorem~\ref{thm:mainintro}.

\subsection{Acknowledgements}
RA was supported by the Additional Funding Programme for Mathematical Sciences, delivered by EPSRC (EP/V521917/1) and the Heilbronn Institute for Mathematical Research. 
DV was supported by the European Research Council (ERC) under the European Union’s Horizon 2020 research and innovation program
(Grant Agreement No. 817581). 

\section{Background}\label{sec:background}

\subsection{An introduction to the diagrammatic calculus}\label{sec:diagcalc}

The diagrammatic calculus we present in this paper is, formally, the diagrammatic calculus of the 2-category $\TwoFHilb$ of finite-dimensional (f.d.) 2-Hilbert spaces~\cite{Baez96}. This graphical calculus was first proposed for application to quantum information theory in~\cite{Vicary2012,Vicary2012a}; for further work in this direction see e.g.~\cite{Reutter2019,Verdon2024,Claeys2024}.
A full and accessible introduction to $\TwoFHilb$ and its graphical calculus, and to category theory more generally, can be found in the book~\cite{Heunen2019}. Accessible introductions are also given in~\cite{Reutter2019,Verdon2024}; these provide further details and may help the reader if our exposition is at any point unclear.

In order to make our exposition accessible to a broad audience, we will avoid discussion of category theory altogether and simply introduce the diagrammatic calculus of $\TwoFHilb$ as a way to depict indexed families of linear maps, following the approach taken in~\cite{Reutter2019}. We would not recommend working with the graphical calculus without understanding its categorical underpinnings (for this see e.g.~\cite[\S{}8]{Heunen2019}), but our elementary overview should be sufficient for the reader to understand our proof.  

\subsubsection{Interpreting 2-morphism diagrams}

The essential difference between the diagrammatic calculus we will use in this paper and the well-known calculus of \emph{tensor diagrams} (a.k.a. \emph{wire diagrams} or \emph{string diagrams}) (described in e.g.~\cite[Chap. 4]{Coecke2017}) is that the diagrams here include planar regions which are connected by the wires. We call such diagrams \emph{2-morphism diagrams}. Here is an example:
\begin{align}\label{eq:2morphex}\includegraphics[valign=c]{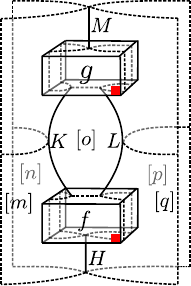}
\end{align}
While tensor diagrams represent linear maps, 2-morphism diagrams represent \emph{indexed families} of linear maps. We will shortly explain how to interpret these diagrams, but will first discuss their basic constituents: planar regions, wires and boxes.

\paragraph{Regions and objects.} The planar regions in a 2-morphism diagram are labelled by \emph{index sets}. We write $[n]$ for the set $\{1,\dots,n\}$, where $n \in \mathbb{N}$; all our index sets are sets of this type. If a region is labelled by $[1]$ it will be invisible in the graphical calculus.

We can compose planar regions in only one way: by layering them in front of each other. We will use the notation $\boxtimes$ for this composition, where the plane at the back is on the left of the product symbol and the plane at the front is on the right of the product symbol. For instance, suppose we have a planar region labelled by $[m]$ and a planar region labelled by $[n]$; then we can draw the planar region $[m] \boxtimes [n]$, depicted as follows:
\begin{align*}
\includegraphics[valign=c]{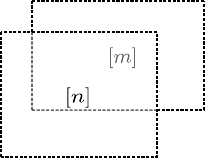}
\end{align*}
In the language of category theory, any planar region, or more generally any composition of planar regions, is called an \emph{object}.

\paragraph{1-morphisms.} 1-morphisms are maps from one object to another. In the graphical calculus they are depicted as wires which travel up the page and join an object on the left of the wire --- the \emph{source} --- to an object on the right of the wire --- the \emph{target}. Every 1-morphism represents an indexed family of f.d. Hilbert spaces; these families are indexed by the index sets labelling the planar regions in the source and target. We say that a 1-morphism is of \emph{type} $O_1 \to O_2$, where $O_1$ is the source and $O_2$ is the target. For instance, consider the following 1-morphism:
\begin{align*}
\includegraphics[valign=c]{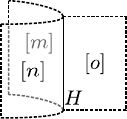}
\end{align*}
This 1-morphism $H$ is of type $[m] \boxtimes [n] \to [o]$. The planar regions in the source and target have index sets $[m]$, $[n]$ and $[o]$. The 1-morphism therefore represents an indexed family of Hilbert spaces $\{H_{mno}\}_{m,n,o \in [m] \times [n] \times [o]}$. 

Note that 1-morphism wires are drawn with a solid black line, whereas the other lines in the diagram, which only serve to indicate the position of the planar regions, are dashed. This convention will be used throughout this work.

We can compose 1-morphisms in two ways. Firstly, we can layer them in front of each other, just like planar regions. We again use the $\boxtimes$ notation for this composition. For instance, suppose we have a 1-morphism $H: [m] \to [n]$ and a 1-morphism $K: [p] \to [q]$; then we can form the composition $H \boxtimes K: [m] \boxtimes [p] \to [n] \boxtimes [q]$, depicted as follows:
\begin{align*}
\includegraphics[valign=c]{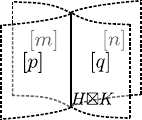}
~~:=~~
\includegraphics[valign=c]{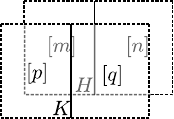}
\end{align*}
The composition $H \boxtimes K$ again represents a family of Hilbert spaces $$\{(H \boxtimes K)_{mnpq}\}_{m,n,p,q \in [m] \times [n] \times [p] \times [q]}$$ indexed by the surrounding planar regions. These Hilbert spaces are defined using the tensor product:
$$
(H \boxtimes K)_{mnpq} := H_{mn} \otimes K_{pq}
$$
The other way we can compose 1-morphisms is as follows. Suppose we have two 1-morphisms $H: O_1 \to O_2$ and $K: O_2 \to O_3$. Since the object on the right of $H$ is the same as the object on the left of $K$, we can compose the wires `horizontally', across the object $O_2$. We will use the notation $\otimes$ for this composition. Suppose for example that $O_1:= [m]$, $O_2:=[n]$ and $O_3:=[o]$; then $H \otimes K: [m] \to [o]$ is depicted as follows:
\begin{align*}
\includegraphics[valign=c]{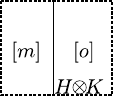}
~~:=~~
\includegraphics[valign=c]{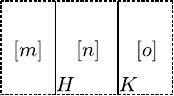}
\end{align*}
Again, the composition $H \otimes K$ represents a family of Hilbert spaces 
$$
\{(H \otimes K)_{mo}\}_{m,o \in [m] \times [o]}
$$
indexed by the planar regions of the source and target. These Hilbert spaces are defined using the tensor product and direct sum:
$$
(H \otimes K)_{mo} := \bigoplus_{n \in [n]} H_{mn} \otimes K_{no}
$$

\paragraph{Direct sum of 1-morphisms.} We have mentioned that there are two ways to compose 1-morphisms in a general 2-category. Because of the linear structure of $\TwoFHilb$, it possesses a third type of composition which we will discuss now.

Suppose that $H$ and $K$ are 1-morphisms of the same type; for example, suppose that $H,K: [m] \to [n]$. Then we can take the \emph{direct sum} $H \oplus K: [m] \to [n]$. Again, this 1-morphism represents an indexed family of Hilbert spaces 
$$
\{(H \oplus K)_{mn}\}_{m,n \in [m] \times [n]}.
$$
These Hilbert spaces are defined using the componentwise direct sum in the obvious way:
\begin{align}\label{eq:directsum1morphisms}
(H \oplus K)_{mn}:= H_{mn} \oplus K_{mn}
\end{align}

\paragraph{2-morphisms.} 2-morphisms are maps from one 1-morphism to another. The two 1-morphisms must have the same type. In the graphical calculus 2-morphisms are depicted as boxes. The \emph{source} 1-morphism enters at the bottom of the box, and the \emph{target} 1-morphism leaves at the top of the box. We say that the box is of \emph{type} $H \to K$, where $H$ is the source 1-morphism and $K$ is the target 1-morphism. 2-morphisms represent indexed families of linear maps, indexed by the planar regions which are the source and target of the source and target 1-morphisms. 

For example, suppose we have two 1-morphisms $H, K: [m] \boxtimes [n] \to [o]$. Since these 1-morphisms are of the same type, we can have a 2-morphism $f: H \to K$ between them. This 2-morphism is depicted as a box:
$$
\includegraphics[valign=c]{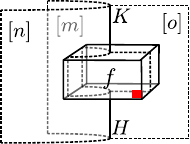}
$$
The reader will see that the 1-morphism $H$ enters in at the bottom of the box, and the 1-morphism $K$ leaves at the top. Notice that the box has a red square in the bottom-right corner of the front face; this notation allows us to easily depict the dagger, transpose and conjugate of a 2-morphism, as we will explain shortly. Since the 1-morphisms have source $[m] \boxtimes [n]$ and target $[o]$, the 2-morphism $f$ represents an indexed family of linear maps:
$$
f:=\{f_{mno}: H_{mno} \to K_{mno}\}_{m,n,o \in [m] \times [n] \times [o]}
$$
We can compose 2-morphisms in three ways. Firstly, we can layer them in front of each other. We use the notation $\boxtimes$ for this composition. Suppose for example that we have wires $H,K: [m] \to [n]$ and $L,M: [o] \to [p]$, and boxes $f: H \to K$ and $g: L \to M$. Then we obtain a box $f \boxtimes g: H \boxtimes L \to K \boxtimes M$:
$$
\includegraphics[valign=c]{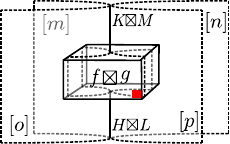}
~~:=~~
\includegraphics[valign=c]{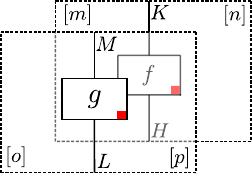}
$$
Here we have flattened the boxes on the right hand side; this is because they each have only one plane touching at the left and right, so there is no need to make them three-dimensional. We will do this frequently later on. As discussed above, this composition of boxes represents an indexed family of linear maps 
$$
\{(f \boxtimes g)_{mnop}: H_{mn} \otimes L_{op} \to K_{mn} \otimes M_{op}\}_{m,n,o,p \in [m] \times [n] \times [o] \times [p]}.
$$ 
These linear maps are defined using the tensor product:
$$
(f \boxtimes g)_{mnop}:= 
f_{mn} \otimes g_{op}
$$
Secondly, we can compose 2-morphisms `horizontally', provided the types of the 1-morphisms match (what we mean by this will be clarified by the example). We use the notation $\otimes$ for this composition. Suppose for example that we have 1-morphisms $H,K: [m] \to [n]$ and $L,M: [n] \to [o]$, and boxes $f: H \to K$ and $g: L \to M$. Since the target of the 1-morphisms $H,K$ is the same as the source of the 1-morphisms $L,M$, we can compose the boxes across that object to obtain a 2-morphism $f \otimes g: H \otimes L \to K \otimes M$:
$$
\includegraphics[valign=c]{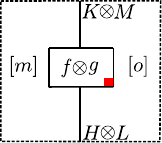}
~~:=~~
\includegraphics[valign=c]{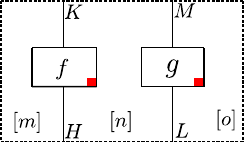}
$$
This composition represents an indexed family of linear maps
$$
\{(f \otimes g)_{mo}: \bigoplus_{n \in [n]} H_{mn} \otimes L_{no} \to \bigoplus_{n \in [n]} K_{mn} \otimes M_{no}\}_{m,o \in [m] \times [o]}.
$$
These linear maps are defined using the tensor product and direct sum:
$$
(f \otimes g)_{mo} := \bigoplus_{n \in [n]} f_{mn} \otimes g_{no}
$$
Third, we can compose 2-morphisms `vertically', provided that the source of one 2-morphism is identical with the target of the other. We use the notation $\circ$ for this composition. Suppose for example that we have 1-morphisms $H,K,L: [m] \to [n]$ and 2-morphisms $f: H \to K$ and $g: K \to L$. Then we can form the vertical composition $g \circ f: H \to L$:
$$
\includegraphics[valign=c]{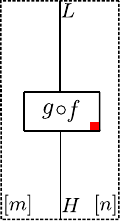}
~~:=~~
\includegraphics[valign=c]{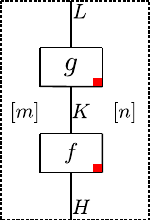}
$$
This composition represents a family of linear maps 
$$
\{(g \circ f)_{mn}: H_{mn} \to L_{mn}\}_{m,n \in [m] \times [n]}
$$
indexed by the adjoining planar regions. These are defined using ordinary composition of linear maps:
$$
(g \circ f)_{mn} := g_{mn} \circ f_{mn}
$$
We finish by mentioning \emph{identity 2-morphisms}. Let $O_1,O_2$ be objects and $H: O_1 \to O_2$ be a 1-morphism. We write $\id_H: H \to H$ for the identity 2-morphism on $H$. This 2-morphism represents a family of linear maps indexed by the planar regions in $O_1,O_2$, all of which are the identity map. Identity 2-morphisms are invisible in the graphical calculus. 

\paragraph{Direct sum of 2-morphisms.} The linear structure of $\TwoFHilb$ gives rise to another composition of 2-morphisms. Let $H,K,L$ be 1-morphisms of the same type, and let $f: H \to K$ and $g: H \to L$ be 2-morphisms. Then there is a 2-morphism 
$$f \oplus g: H \to K \oplus L$$
given by the componentwise direct sum of linear maps. For example, suppose that $H,K,L: [m] \to [n]$; then $f \oplus g$ represents a family of linear maps
$$
\{(f \oplus g)_{mn}\}_{m,n \in [m] \times [n]}
$$
defined by 
\begin{align}\label{eq:directsum2morphisms}
(f \oplus g)_{mn} := \begin{pmatrix} f_{mn} \\ g_{mn} \end{pmatrix}: H_{mn} \to K_{mn} \oplus L_{mn},
\end{align}
where we have used matrix notation to express the linear map $H_{mn} \to K_{mn} \oplus L_{mn}$ obtained from the linear maps $f_{mn}: H_{mn} \to K_{mn}$ and $g_{mn}: H_{mn} \to L_{mn}$ by the universal property of the direct sum. 
\paragraph{Interpretation.} We can now explain how to interpret a general 2-morphism diagram. The simple rule to follow is that a 2-morphism diagram represents a family of linear maps indexed by the open planar regions in the diagram, while the index sets for closed planar regions are summed over. This is probably best seen by example. Let us revisit the example of a 2-morphism given above:
$$
\includegraphics[valign=c]{graphics/graphcalc/2morphex.pdf}
$$
Based on what we have said above we can analyse this diagram. 
There are five different planar regions in the diagram, with index sets $[m],[n],[o],[p],[q]$. There are four wires of the following types:
\begin{align*}
H: [n] \boxtimes [m] \to [p] \boxtimes [q]
&&
K: [n] \boxtimes [m] \to [o]
\\
L: [o] \to [p] \boxtimes [q]
&&
M: [n] \boxtimes [m] \to [p] \boxtimes [q]
\end{align*}
These four wires represent indexed families of Hilbert spaces:\begin{align*}
\{H_{mnpq}\}_{m,n,p,q \in [m] \times [n] \times [p] \times [q]}
&&
\{K_{mno}\}_{m,n,o \in [m] \times [n] \times [o]} 
\\
\{L_{opq}\}_{o,p,q \in [o] \times [p] \times [q]} 
&&
\{H_{mnpq}\}_{m,n,p,q \in [m] \times [n] \times [p] \times [q]}
\end{align*}
There are two boxes of the following types:
\begin{align*}f: H \to K \otimes L
&&
g: K \otimes L \to M
\end{align*}
These two boxes represent indexed families of linear maps:
\begin{align*}
\{f_{mnpq} : H_{mnpq} \to \bigoplus_{o \in [o]} K_{mno} \otimes L_{opq}\}_{m,n,p,q \in [m] \times [n] \times [p] \times [q]}
\\
\{g_{mnpq} : \bigoplus_{o \in [o]} K_{mno} \otimes L_{opq} \to M_{mnpq}\}_{m,n,p,q \in [m] \times [n] \times [p] \times [q]}
\end{align*}
By the universal property of the direct sum, the linear map $f_{mnpq}$ is equivalent to a set of linear maps 
$$
\{f_{mnopq}: H_{mnpq} \to K_{mno} \otimes L_{opq}\}_{o \in [o]}.
$$
Likewise, $g_{mnpq}$ is equivalent to a set of linear maps 
$$
\{g_{mnopq}: K_{mno} \otimes L_{opq} \to M_{mnpq}\}_{o \in [o]}
$$
The diagram is a vertical composition representing a 2-morphism of type $H \to M$. There are four open planar regions and one closed planar region. Summing over the index set for the closed planar region, we see that the 2-morphism diagram represents the following indexed family of linear maps:
$$
\{\sum_{o \in [o]}g_{mnopq} \circ f_{mnopq}: H_{mnpq} \to M_{mnpq}\}_{m,n,p,q \in [m] \times [n]\times [p] \times[q]}
$$

\subsubsection{Duality for 1-morphisms}
We will now discuss duality for 1-morphisms. 

Let $O_1,O_2$ be objects and let $H: O_1 \to O_2$ be a 1-morphism, which as we have seen represents an indexed family of Hilbert spaces.

We will now define a 1-morphism $H^*: O_2 \to O_1$ representing the same indexed family of Hilbert spaces, except that now the object $O_2$ is on the left, and the object $O_1$ is on the right. This 1-morphism $H^*$ will be called the \emph{dual} of $H$. 

We will first show how the dual is defined, then show how it behaves with respect to composition. 

\paragraph{Duals of wires.}
For example, then, let $H: [m] \boxtimes [n] \to [o]$ be a 1-morphism. As we have seen, this 1-morphism represents an indexed family of Hilbert spaces $\{H_{mno}\}_{m,n,o \in [m] \times [n] \times [o]}$.

The dual is a 1-morphism $H^*: [o] \to [m] \boxtimes [n]$ which again represents the indexed family of Hilbert spaces $\{H_{mno}\}_{m,n,o \in [m] \times [n] \times [o]}$. Note that the family of Hilbert spaces is the same, but the source and target have been switched.

In the diagrammatic calculus, to distinguish the wire for $H$ from the wire for $H^*$, we will label the former with an upwards pointing arrow and the latter with a downwards pointing arrow. Here are the depictions of the 1-morphisms $H$ (on the left) and $H^*$ (on the right):
\begin{align*}
\includegraphics[valign=c]{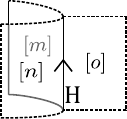}
&&
\includegraphics[valign=c]{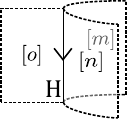}
\end{align*}

\paragraph{Duals under composition.}
The above definition of the dual holds for any 1-morphism. However, for composite 1-morphisms there is an alternative and categorically equivalent way to define the dual. This is specified by the following equations:
\begin{align*}
(H \boxtimes K)^* \cong H^* \boxtimes K^*
&&
(H \otimes K)^* \cong K^* \otimes H^*.
\end{align*}
In category theory the $\cong$ symbol indicates natural isomorphism, but we will not worry about what this means here; all that it is necessary to know is that one may equivalently define the dual according to the prescription on the RHS of the above equations. 

We will give two examples of how this works in diagrams. For the first example, let $H: [m] \to [n]$ and $K: [o] \to [p]$ be 1-morphisms. The left diagram below shows $H \boxtimes K: [m] \boxtimes [o] \to [n] \boxtimes [p]$, and the right diagram shows $H^* \boxtimes K^*: [n] \boxtimes [p] \to [m] \boxtimes [o]$:
\begin{align*}
\includegraphics[valign=c]{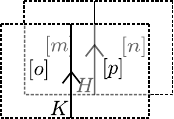}
&&
\includegraphics[valign=c]{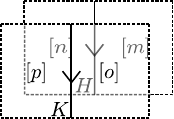}
\end{align*}
For the second example, let $H: [m] \to [n]$ and $K: [n] \to [o]$ be 1-morphisms. The left diagram below shows $H \otimes K: [m] \to [o]$ and the right diagram shows $K^* \otimes H^*: [o] \to [m]$:
\begin{align*}
\includegraphics[valign=c]{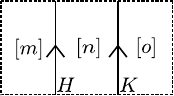}
&&
\includegraphics[valign=c]{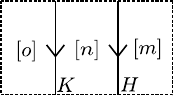}
\end{align*}

\paragraph{Cups and caps.} In category theory, the `duality' between a 1-morphism $H$ and its dual $H^*$ is witnessed by a pair of special 2-morphisms called a `cup and cap'. We will now define this cup and cap and explain the equations which characterise them. 

Let $H: O_1 \to O_2$ be a 1-morphism. The \emph{cup} is a 2-morphism of type $\eta_H: - \to H^* \otimes H$ and the \emph{cap} is a 2-morphism of type $\epsilon_H: H \otimes H^* \to -$. Here we use the $-$ to indicate that the input or output of the box are empty; formally, this is to say that the input or output is an \emph{identity 1-morphism}, but that is something we will gloss over in this basic introduction. We will draw the cup and cap boxes in a topologically suggestive way, as bends in the wire. For example, let $H: [m] \boxtimes [n] \to [o]$ be a wire; then the cup and cap are depicted as follows:
\begin{align}\label{eq:cupcap3d}
\includegraphics[valign=c]{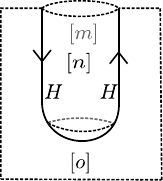}
&&
\includegraphics[valign=c]{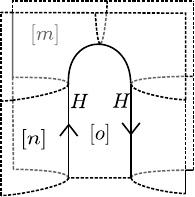}
\end{align}
Like all 2-morphisms, the cup and cap represent families of linear maps. We will show how to determine these linear maps for the example 1-morphism $H: [m] \otimes [n] \to [o]$; the definition for a general 1-morphism is similar.

By the type of the cup $\eta_H$, we see that it represents an indexed family of linear maps $\{\eta_{mno}: \mathbb{C} \to H_{mno} \otimes H_{mno}\}_{m,n,o \in [m] \times [n] \times [o]}$. (We have here used the convention that a box whose input is empty represents linear maps from $\mathbb{C}$.) Such a linear map is determined by a single vector, namely the image of the unit $1$. We define:
$$
\eta_{mno}: 1 \mapsto \sum_{i \in I} \ket{i} \otimes \ket{i}
$$
where $\{\ket{i}\}_{i \in I}$ is some chosen orthonormal basis for $H_{mno}$. This defines the cup. The cup for any 1-morphism is defined similarly.

By the type of the cap $\epsilon_H$, we see that it represents an indexed family of linear maps $\{\epsilon_{mno}: H_{mno} \otimes H_{mno} \to \mathbb{C}\}$. (We have here used  the convention that a box whose output is empty represents linear maps to $\mathbb{C}$.) Using the same orthonormal basis, we define these maps as follows on basis elements: 
$$
\epsilon_{mno}: \ket{i} \otimes \ket{j} \mapsto \braket{i|j}
$$
The cap for an arbitrary 1-morphism is defined similarly. 

\paragraph{Snake equations.} The point of the cup and cap is that they obey the following `snake' or `zigzag' equations: \begin{align}\label{eq:snakes}
\includegraphics[valign=c]{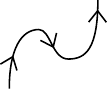}
~~=~~
\includegraphics[valign=c]{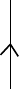}
&&
\includegraphics[valign=c]{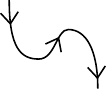}
~~=~~
\includegraphics[valign=c]{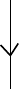}
\end{align}
In these equations we drew only the 1-morphism wire and ignored the planar regions. To see what the snake equations look like with the planar regions included, here they are for the 1-morphism $H: [m] \times [n] \to [o]$, for example:
\begin{align*}
\includegraphics[valign=c,scale=0.9]{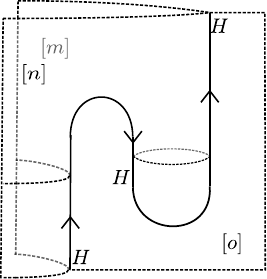}
&=
\includegraphics[valign=c,scale=0.9]{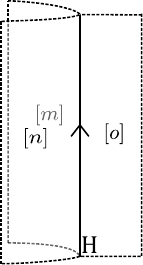}
\end{align*}
\begin{align*}
\includegraphics[valign=c,scale=0.9]{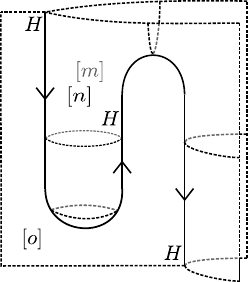}
&=
\includegraphics[valign=c,scale=0.9]{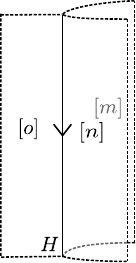}
\end{align*}

\paragraph{Cup and cap for compositions of 1-morphisms.} Above we explained an alternative way to define the dual for compositions of 1-morphisms. Here we will explain how the cup and cap are defined for those duals. 

For the composition $\boxtimes$, we have the following rule:
\begin{align*}
\epsilon_{H \boxtimes K}:= \epsilon_H \boxtimes \epsilon_K
&&
\eta_{H \boxtimes K}:= \eta_H \boxtimes \eta_K
\end{align*}
For example, let $H: [m] \to [n]$ and $K: [o] \to [p]$ be 1-morphisms; then the cup and cap $\eta_{H \boxtimes K}$ and $\epsilon_{H \boxtimes K}$ are as follows:
\begin{align*}
    \includegraphics[valign=c]{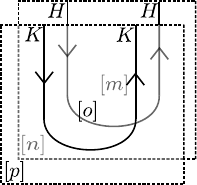}
&&
\includegraphics[valign=c]{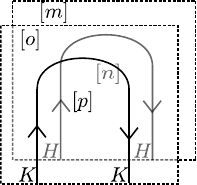}
\end{align*}
For the horizontal composition $\otimes$, the cups and caps are `nested' according to the following rule:
\begin{align*}
\epsilon_{H \otimes K} := \epsilon_H \circ (\id_H \otimes \epsilon_K \otimes \id_{H^*})
&&
\eta_{H \otimes K} := (\id_{K^*} \otimes \eta_H \otimes \id_{K}) \circ \eta_{K}
\end{align*}
For example, let $H: [m] \to [n]$ and $K:[n] \to [o]$ be 1-morphisms, then the cup and cap $\eta_{H \otimes K}$ and $\epsilon_{H \otimes K}$ are as follows:
\begin{align*}
\includegraphics[valign=c]{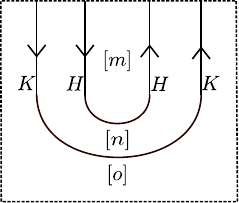}
&&
\includegraphics[valign=c]{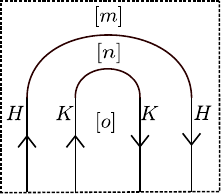}
\end{align*}

\subsubsection{Dagger, transpose and conjugate.}
\paragraph{Dagger.} Let $H,K$ be 1-morphisms of the same type, and let $f: H \to K$ be a 2-morphism. The \emph{dagger} of $f$ is a 2-morphism 
$$
f^{\dagger}: K \to H
$$
We will define the dagger of a 2-morphism by example, since the example we will present generalises straightforwardly to any 2-morphism. For example, then, let $H: [m] \to [o]$, $K:[m] \to [n]$ and $L: [n] \to [o]$ be 1-morphisms, and consider a box $f: H \to K \otimes L$:
$$
\includegraphics[valign=c]{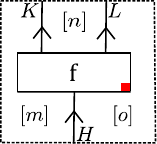}
$$
As we have seen, this box represents an indexed family of linear maps
$$\{f_{mo}: H_{mno} \to \bigoplus_{n \in [n]} K_{mn} \otimes L_{no}\}_{m,o \in [m] \times [o]}.
$$
The dagger of $f$ is a box $f^{\dagger}: K \otimes L \to H$, labelled with $f$ but with the red square on the top right rather than the bottom right corner, as if it had been reflected in a horizontal axis while preserving the direction of the arrows on the wires:
$$
\includegraphics[valign=c]{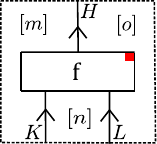}
$$
This box $f^{\dagger}$ represents the indexed family of linear maps
$$
\{(f_{mo})^{\dagger}: \bigoplus_{n \in [n]} K_{mn} \otimes L_{no} \to H_{mo}\}_{m,o \in [m] \times [o]}
$$
where the dagger symbol for a linear map indicates the Hermitian adjoint. 

To take the dagger of a composition of 2-morphisms, one simply reflects the whole diagram in a horizontal axis while preserving the direction of the arrows on the wires. For example, the dagger of~\eqref{eq:2morphex} is given by the following 2-morphism diagram:
$$
\includegraphics[valign=c]{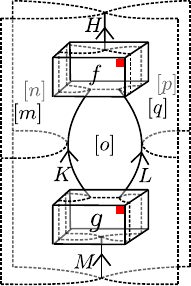}
$$
Using what has already been said about interpreting 2-morphism diagrams, the reader should easily be able to determine that this 2-morphism represents the following family of linear maps:
$$
\{\sum_{o \in [o]} (f_{mnopq})^{\dagger}\circ (g_{mnopq})^{\dagger}: M_{mnpq} \to H_{mnpq}\}_{m,n,p,q \in [m] \times [n] \times [p] \times [q]}
$$
Here $\{f_{mnopq}\}_{o \in [o]}$ and $\{g_{mnopq}\}_{o \in [o]}$ have been defined using the universal property of the direct sum, as in the above paragraph on interpretation of 2-morphism diagrams. 

Before moving on, we should mention the dagger of the cup and cap for a 1-morphism. These are again drawn topologically as bends in the wire, except now the arrow goes from right to left. This `left' duality obeys the analogous snake equations:
\begin{align*}
\includegraphics[valign=c]{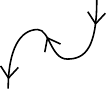}
~~=~~
\includegraphics[valign=c]{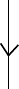}
&&
\includegraphics[valign=c]{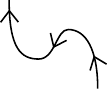}
~~=~~
\includegraphics[valign=c]{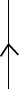}
\end{align*}
From now on we will call these daggers the `left' cup and cap and the cup and cap we defined before the `right' cup and cap. 

\paragraph{Transpose.} 

Let $H$ and $K$ be 1-morphisms of the same type, and let $f: H \to K$ be a 2-morphism. The \emph{transpose} of $f$ is a 2-morphism 
$$f^T: K^* \to H^*.$$
The box for $f^T$ is labelled with $f$, but has the red square in the top left corner. 

The transpose is defined using the right, or equivalently, the left cup and cap, as follows:
\begin{align*}
\includegraphics[valign=c]{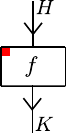}
~~:=~~
\includegraphics[valign=c]{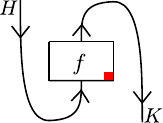}
~~=~~
\includegraphics[valign=c]{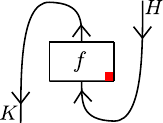}
\end{align*}
For composite 1-morphisms, the transpose can equivalently be defined using the composite cup and cap described above.

\paragraph{Conjugate.}

Let $H$ and $K$ be 1-morphisms of the same type, and let $f: H \to K$ be a 2-morphism. The \emph{conjugate} of $f$ is a 2-morphism
$$
f^*: H^* \to K^*.
$$
The box for $f^T$ is labelled with $f$, but has the red square in the bottom left corner. 

It is defined by $f^* := (f^T)^{\dagger} = (f^{\dagger})^T$; that is, either as the dagger of the transpose or, equivalently, the transpose of the dagger:
\begin{align*}
\includegraphics[valign=c]{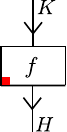}
~~:=~~
\includegraphics[valign=c]{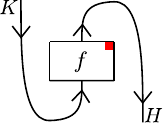}
~~=~~
\left(\includegraphics[valign=c]{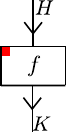}\right)^{\dagger}
\end{align*}

\paragraph{Sliding equations.} With the definitions of dagger, transpose and conjugate we have given, the 2-morphism boxes slide around cups and caps as one might expect from the topology:
\begin{align*}
\includegraphics[valign=c,scale=0.8]{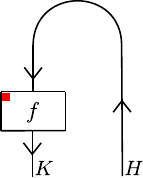}
~=~
\includegraphics[valign=c,scale=0.8]{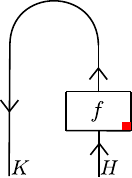}
&&
\includegraphics[valign=c,scale=0.8]{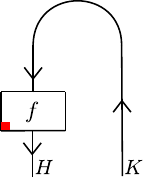}
~=~
\includegraphics[valign=c,scale=0.8]{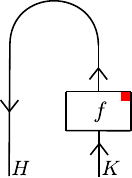}
\end{align*}
\begin{align}\label{eq:sliding}
\includegraphics[valign=c,scale=0.8]{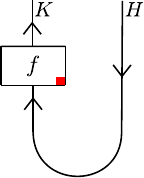}
~=~
\includegraphics[valign=c,scale=0.8]{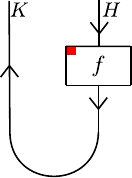}
&&
\includegraphics[valign=c,scale=0.8]{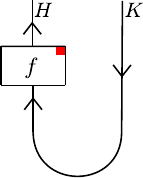}
~=~
\includegraphics[valign=c,scale=0.8]{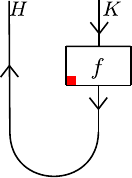}
\end{align}
The analogous equations for the right cup and cap are also obeyed. The analogous equations hold for the composite cups and caps defined for composites of  1-morphisms. We have here ignored the planar regions; they can easily be added in, as we did in~\eqref{eq:cupcap3d}.

\subsubsection{Isometries, unitaries, projections and partial isometries.}

Let $H,K$ be 1-morphisms of the same type. We say that a 2-morphism $f: H \to K$ is:
\begin{itemize}
    \item \emph{Left invertible} if there exists a \emph{left inverse}, that is a 2-morphism $g: K \to H$ such that $g \circ f = \id_H$.
    \item \emph{Invertible} if  there exists an \emph{inverse}, that is a 2-morphism $f^{-1}: K \to H$ such that $f^{-1} \circ f = \id_H$ and $f \circ f^{-1} = \id_{K}$.
    \item An \emph{isometry} if $f^{\dagger}\circ f = \id_H$.
    \item A \emph{coisometry} if $f \circ f^{\dagger} = \id_K$.
    \item A \emph{unitary} if it is both an isometry and a coisometry (in this case the dagger is the inverse).
    \item A \emph{partial isometry} if $(f^{\dagger} \circ f) \circ (f^{\dagger} \circ f)  = f^{\dagger} \circ f$; or, equivalently, if $(f \circ f^{\dagger}) \circ (f \circ f^{\dagger}) = f \circ f^{\dagger}$.
\end{itemize}
We further say that a 2-morphism $f: H \to H$ is \emph{positive} if there exists a 1-morphism $K$ of the same type as $H$ and a 2-morphism $g: H \to K$ such that $f = g^{\dagger}\circ g$.
\subsubsection{Duality for objects.}
Finally, we will briefly discuss some aspects of the duality structure on objects. 

Let $[n]$ be a planar region. There is a 1-morphism $B_{[n]}: [1] \to [n] \boxtimes [n]$ representing a family of Hilbert spaces $
\{(B_{[n]})_{n_1n_2}\}_{n_1,n_2 \in [n] \times [n]}$,
defined by
$$
(B_{[n]})_{n_1n_2} := \delta_{n_1,n_2} \mathbb{C}.
$$
That is, the Hilbert space is zero-dimensional if $n_1 \neq n_2$ and one-dimensional otherwise. 

We will draw this 1-morphism and its dual $B_{[n]}^*: [n] \boxtimes [n] \to [1]$ using dashed lines, as if they were simply bends in the planar region $[n]$  (recall that the planar region for $[1]$ is not depicted in the graphical calculus):
\begin{align*}
\includegraphics[valign=c]{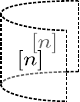}
&&
\includegraphics[valign=c]{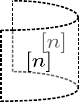}
\end{align*}
The right cap and cup for this duality are depicted as follows:
\begin{align*}
\includegraphics[valign=c]{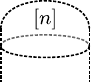}
&&
\includegraphics[valign=c]{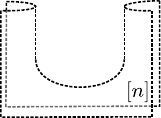}
\end{align*}
Note that these are both critical points of the surface spanned by the planar region $[n]$: the right cap is a local maximum, whereas the right cup is a saddle. The snake equations for this cup and cap can be understood as isotopies of the surface. The left cup and cap are depicted similarly.

There are also unitary 2-morphisms called \emph{cusps} which realise the snake equations in the horizontal direction. The following diagrams depict a cusp and its dagger (note we represent the cusp 2-morphism by a black vertex rather than a box):
\begin{align}\label{eq:cusps}
\includegraphics[valign=c]{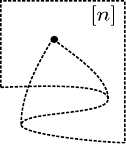}
&&
\includegraphics[valign=c]{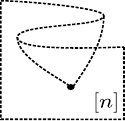}
\end{align}
For a definition of the cusps, see e.g.~\cite[\S{}8.2.8]{Heunen2019}; here we will only use the fact that they exist and are unitary. 

\subsection{Stinespring's and Choi's theorems}
\label{sec:stinespringchoi}
\subsubsection{Endomorphism $C^*$-algebras}

Let $H: O_1 \to O_2$ be a 1-morphism. Let us consider the complex vector space $\End{H}$ of 2-morphisms $H \to H$, where addition and scalar multiplication of indexed families of linear maps are performed in the obvious way. These 2-morphisms form an algebra under vertical composition. They also possess an involution given by the dagger.

With this data, $\End{H}$ is in fact a finite-dimensional $C^*$-algebra. Indeed, all finite-dimensional $C^*$-algebras can be obtained in this way. 
\begin{lem} 
All finite-dimensional $C^*$-algebras can be expressed as endomorphism algebras of 1-morphisms of type $[1] \to [n]$, $n \in \mathbb{N}$.
\end{lem}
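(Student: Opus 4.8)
The plan is to exhibit, for an arbitrary finite-dimensional $C^*$-algebra $A$, a 1-morphism $H_A \colon [1] \to [n]$ whose endomorphism algebra $\End{H_A}$ is $*$-isomorphic to $A$. Since every finite-dimensional $C^*$-algebra is a multimatrix algebra, we may write $A \cong \bigoplus_{k=1}^{n} B(V_k)$ for finite-dimensional Hilbert spaces $V_1,\dots,V_n$. The natural candidate is the 1-morphism $H_A \colon [1] \to [n]$ representing the indexed family of Hilbert spaces $\{(H_A)_k\}_{k \in [n]}$ defined by $(H_A)_k := V_k$. The source region $[1]$ is trivial, so the only indices in play are those of the target region $[n]$.

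First I would unpack what a 2-morphism $f \colon H_A \to H_A$ is concretely: by the interpretation rules for 2-morphism diagrams, it is an indexed family of linear maps $\{f_k \colon V_k \to V_k\}_{k \in [n]}$, i.e. precisely an element of $\bigoplus_{k=1}^n B(V_k)$. Next I would check that the algebraic operations match up: addition and scalar multiplication of 2-morphisms are componentwise by definition, vertical composition $g \circ f$ is componentwise composition $\{g_k \circ f_k\}_k$, the identity 2-morphism $\id_{H_A}$ is $\{\id_{V_k}\}_k$, and the dagger $f^\dagger$ is $\{(f_k)^\dagger\}_k$, the componentwise Hermitian adjoint. Hence the map $\End{H_A} \to \bigoplus_k B(V_k)$ sending $f \mapsto (f_k)_k$ is a unital $*$-algebra isomorphism, and it is automatically isometric for the $C^*$-norms (a $*$-isomorphism of $C^*$-algebras is automatically isometric), so it is a $C^*$-algebra isomorphism. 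Composing with the chosen isomorphism $A \cong \bigoplus_k B(V_k)$ gives the result.

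I do not anticipate a serious obstacle here; the statement is essentially a bookkeeping exercise confirming that the abstract $\TwoFHilb$ machinery reproduces multimatrix algebras on the nose. The one point requiring a little care is making sure the $C^*$-algebra structure claimed for $\End{H_A}$ — in particular that it genuinely satisfies the $C^*$-identity $\|f^\dagger \circ f\| = \|f\|^2$ — is the standard one on $\bigoplus_k B(V_k)$; this follows immediately once the $*$-algebra isomorphism above is in place, since one can simply transport the norm, and uniqueness of the $C^*$-norm on a finite-dimensional $*$-algebra shows this agrees with whatever norm one would place on $\End{H_A}$ intrinsically. If one wished to avoid invoking the classification of finite-dimensional $C^*$-algebras as multimatrix algebras, one could not — that classification is exactly the input that makes "all finite-dimensional $C^*$-algebras arise this way" true, so I would state it explicitly as the starting point of the proof.
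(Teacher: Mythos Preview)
Your proposal is correct and follows essentially the same approach as the paper: invoke the multimatrix decomposition $A \cong \bigoplus_{i \in [n]} B(H_i)$, take the 1-morphism $H:[1]\to[n]$ representing the family $\{H_i\}$, and observe that $\End{H} \cong \bigoplus_{i \in [n]} B(H_i)$ as $*$-algebras. The paper simply asserts this last isomorphism as ``straightforwardly seen'', whereas you spell out the componentwise matching of composition, dagger, and identity; the content is identical.
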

\begin{proof}
Recall that every finite-dimensional $C^*$-algebra is a multimatrix algebra 
$$
\bigoplus_{i \in [n]} B(H_i)
$$
for some finite $n \in \mathbb{N}$ and finite-dimensional Hilbert spaces $\{H_{i}\}_{i \in I}$.
Now consider the wire
$H: [1] \to [n]$ representing the family of Hilbert spaces 
$$
\{H_i\}_{i \in [n]}.
$$
It is straightforwardly seen that there is a $*$-isomorphism
\begin{align}\label{eq:endiso}
\End{H} \cong \bigoplus_{i \in [n]} B(H_i).
\end{align}
\end{proof}
\noindent
\begin{rmk}[Positivity]
Positivity of an element $f \in \End{H}$ in the $C^*$-algebraic sense is the same as positivity of that element as a 2-morphism in $\TwoFHilb$, as defined in Section~\ref{sec:diagcalc}. 
\end{rmk}
\begin{rmk}[Traces and dimension]
We remark that under the isomorphism~\eqref{eq:endiso} the usual matrix trace $\Tr(x)$ of an element $x \in \bigoplus_{i \in [n]} B(H_i)$ can be expressed as the following 2-morphism:
\begin{align*}
\includegraphics[valign=c,scale=0.8]{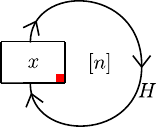}
\end{align*}
To understand how this 2-morphism can represent the trace, which is a scalar, observe that the surrounding planar region is invisible and therefore labelled by $[1]$, and the source and target are empty, so it represents a single linear map $\mathbb{C} \to \mathbb{C}$, which is precisely a scalar. From now on we will simply refer to such 2-morphisms as scalars without repeating this explanation.

We will later be particularly interested in the scalar $\dim(H) := \Tr(\id_H)$:
\begin{align*}
\includegraphics[valign=c,scale=0.8]{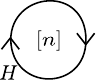}
\end{align*}
If $H$ represents a family of Hilbert spaces $\{H_i\}_{i \in [n]}$, we have $\dim(H) = \sum_{i \in [n]} \dim(H_i)$; it is therefore a positive, natural number.
\end{rmk}

\subsubsection{Generalised Stinespring's theorem}
For a more sophisticated statement of this theorem which applies to general von Neumann algebras, see~\cite{Allen2023}.
\begin{thm}[{\cite[Thm. 1.2]{Allen2023}}]\label{thm:stinespring}
Let $O_1$, $O_2$ be objects, and 
$H: [1] \to O_1$ and $K: [1] \to O_2$ be 1-morphisms.

Let $\mathcal{F}: \End{H} \to \End{K}$ be a completely positive (CP) map of finite-dimensional $C^*$-algebras. Then there exists a 1-morphism $E: O_1 \to O_2$, which we call an \emph{environment}, and a 2-morphism $V: K \to H \otimes E$, which we call a \emph{dilation}, such that 
$$
\mathcal{F}(x) =  V^{\dagger} \circ (x \otimes \id_E) \circ V;
$$
or, as a 2-morphism diagram:
\begin{align*}
\includegraphics[valign=c]{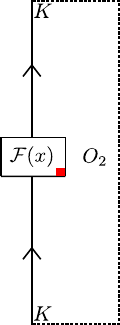}~~=~~\includegraphics[valign=c]{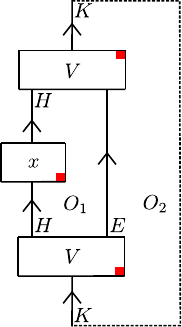}
\end{align*}
We call the pair $(E,V)$ a \emph{Stinespring representation} of the CP map $\mathcal{F}$. The CP map $\mathcal{F}$ is unital (i.e. identity-preserving) iff $V$ is an isometry.  Two representations $(E,V)$ and $(E',V')$ of the same CP map are related by a partial isometry on the environment: that is, there exists a partial-isometric 2-morphism $\sigma: E \to E'$ such that
\begin{align}\label{eq:repmorphism}
V' = (\id_{H} \otimes \sigma) \circ V && V = (\id_{H} \otimes \sigma^{\dagger}) \circ V'.
\end{align}
Indeed, every CP map $\mathcal{F}: \End{ H} \to \End{K}$ has a \emph{minimal representation}. This minimal representation is unique up to a unitary 2-morphism on the environment and is related to every other representation by an isometric 2-morphism on the environment. A representation $(E,V)$ is minimal iff the following 2-morphism $E \to E$ is invertible:
\begin{align}\label{eq:minimalitycond}
\includegraphics[valign=c]{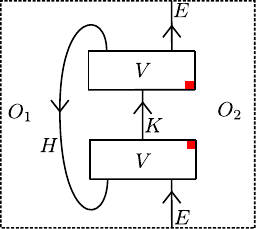}
\end{align}
\end{thm}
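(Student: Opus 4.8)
The plan is to prove the generalised Stinespring theorem diagrammatically by reducing it to the classical matrix-algebra version, applied componentwise. Since $H: [1] \to O_1$, by the lemma $\End{H} \cong \bigoplus_i B(H_i)$ is a multimatrix algebra, and similarly $\End{K} \cong \bigoplus_j B(K_j)$. First I would fix the decompositions of $O_1$ and $O_2$ into their constituent index sets so that $H$ and $K$ become literal indexed families of Hilbert spaces $\{H_i\}_i$, $\{K_j\}_j$. A CP map $\mathcal{F}: \bigoplus_i B(H_i) \to \bigoplus_j B(K_j)$ restricted along the inclusions and projections decomposes into CP maps $\mathcal{F}_{ji}: B(H_i) \to B(K_j)$; the ordinary Stinespring/Kraus theorem gives for each a Hilbert space $E_{ji}$ and a linear map $V_{ji}: K_j \to H_i \otimes E_{ji}$ with $\mathcal{F}_{ji}(x) = V_{ji}^\dagger (x \otimes \id_{E_{ji}}) V_{ji}$. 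I would then assemble these into a single environment 1-morphism $E: O_1 \to O_2$ representing the family $\{E_{ji}\}_{j,i}$ and a 2-morphism $V: K \to H \otimes E$ whose components are the $V_{ji}$. The key check is that the horizontal composition $\otimes$, which is defined by $\bigl(\bigoplus\bigr)$ over the middle index, correctly reproduces $\mathcal{F}$: unwinding the definition of $\otimes$ on 2-morphisms, $\bigl(V^\dagger \circ (x \otimes \id_E) \circ V\bigr)_{j}$ is exactly $\sum_i V_{ji}^\dagger (x_i \otimes \id_{E_{ji}}) V_{ji}$, which is the $j$-component of $\mathcal{F}(x)$. This is the diagrammatic translation of the fact that a CP map between multimatrix algebras is the sum of its matrix-entry CP maps.

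For the unitality statement, I would observe that $\mathcal{F}$ unital means $\sum_i V_{ji}^\dagger V_{ji} = \id_{K_j}$ for each $j$, which is exactly $V^\dagger \circ V = \id_K$, i.e. $V$ is an isometry; the converse is identical. For the uniqueness-up-to-partial-isometry statement I would work componentwise again: two Stinespring dilations $V_{ji}, V_{ji}'$ of the same $\mathcal{F}_{ji}$ are related by a partial isometry $\sigma_{ji}: E_{ji} \to E_{ji}'$ in the usual way (this is the classical theorem on dilations), and I would package the $\sigma_{ji}$ into a 2-morphism $\sigma: E \to E'$, checking that $(\id_H \otimes \sigma) \circ V$ has components $(\id_{H_i} \otimes \sigma_{ji}) V_{ji} = V_{ji}'$, and that $\sigma$ inherits the partial-isometry property componentwise (being a partial isometry is detected by $\sigma^\dagger \sigma$ being idempotent, which is a componentwise condition). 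Minimality I would define via the condition~\eqref{eq:minimalitycond}: its $ji$-component is the map $E_{ji} \to E_{ji}$ built from $V_{ji}$, which is invertible iff the classical dilation $V_{ji}$ is minimal (the $E_{ji}$ is spanned by the image), and then the minimal dilation is obtained by taking the minimal classical dilation in each component; uniqueness up to unitary and the isometric comparison map to any other dilation again follow componentwise from the classical statements.

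The main obstacle I anticipate is purely bookkeeping rather than conceptual: one must be careful that the diagrammatic operation $\otimes$ (horizontal composition of 2-morphisms), the dagger, and the identity 1-morphism behave exactly as the componentwise tensor-sum formulas from Section~\ref{sec:diagcalc} say they do, so that all the classical identities transport cleanly. In particular I would want to double-check the direction conventions on $V: K \to H \otimes E$ versus the Kraus form (the Kraus operators usually go $H_i \to K_j$, so $V_{ji}$ here is their adjoint-transpose), and that when $O_1$ or $O_2$ themselves are composites $\boxtimes$ of several planar regions the reindexing does not introduce spurious factors. Once these conventions are pinned down, every clause of the theorem is a direct componentwise invocation of the finite-dimensional Stinespring/Kraus theorem and its standard uniqueness refinements (see~\cite{Allen2023} for the general statement we are citing).
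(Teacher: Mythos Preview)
The paper does not give its own proof of this theorem: it is stated as background and attributed to \cite[Thm.~1.2]{Allen2023}. So there is nothing to compare against in the present paper.

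That said, your proposal is the natural and correct route. A CP map $\mathcal{F}:\bigoplus_i B(H_i)\to\bigoplus_j B(K_j)$ does decompose as $\mathcal{F}(x)_j=\sum_i\mathcal{F}_{ji}(x_i)$ with each $\mathcal{F}_{ji}$ CP (because the block inclusions and projections are $*$-homomorphisms), and the horizontal composition $\otimes$ in $\TwoFHilb$ is defined precisely so that assembling the componentwise dilations $(E_{ij},V_{ij})$ into a 1-morphism $E$ and a 2-morphism $V$ reproduces $\mathcal{F}$ on the nose. The unitality, partial-isometry comparison, and minimality claims are all componentwise conditions in the graphical calculus (a 2-morphism $E\to E$ is invertible iff each component linear map is), so they reduce to the classical finite-dimensional Stinespring theorem and its standard refinements exactly as you say.

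Two small points worth tightening. First, for the partial-isometry clause, the componentwise argument does work, but you should spell out that the classical statement you are invoking is: given two (not necessarily minimal) dilations, factor each through the minimal one via isometries $\iota,\iota'$ and set $\sigma=\iota'\iota^\dagger$; then check both equations in~\eqref{eq:repmorphism} hold. Second, for general objects $O_1,O_2$ (possibly $\boxtimes$-composites), the cleanest move is to observe that any object is equivalent to some $[n]$ via an invertible 1-morphism, so without loss of generality $O_1=[m]$, $O_2=[n]$; this disposes of the reindexing concern you raise at the end without further bookkeeping.
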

\noindent
The formulation of the generalised Stinespring's theorem we have given is in the Heisenberg picture, where dynamical maps are unital CP maps. In order to move to the Schr\"{o}dinger picture more commonly used in quantum information, where dynamical maps are \emph{trace-preserving} CP maps, we define the dual CP map. 
\begin{defn}\label{def:dualcp}
    Let $O_1, O_2$ be objects, and let $H: [1] \to O_1$ and $K: [1] \to O_2$ be 1-morphisms. Let $\mathcal{F}: \End{H} \to \End{K}$ be a CP map and let $(E,V)$ be a Stinespring representation of $\mathcal{F}$. The dual CP map $\mathcal{F}_*:\End{K} \to \End{H}$ is defined as follows:
    \begin{align*}
    \includegraphics[valign=c]{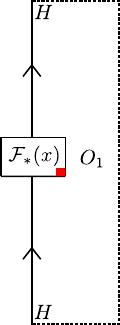}
    ~~:=~~
    \includegraphics[valign=c]{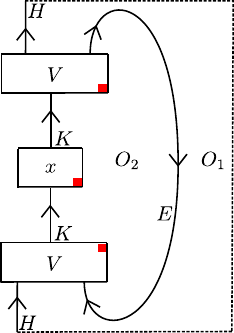}
    \end{align*}
    That is, $\mathcal{F}_*$ is a CP map with a Stinespring representation $(E^*,V_*)$, where $V_*: H \to K \otimes E^*$ is defined as follows:
    \begin{align*}
    \includegraphics[valign=c]{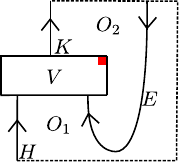}
    \end{align*}
    The definition of the dual CP map $\mathcal{F}_*$ is independent of the choice of Stinespring representation $(V,E)$ of $\mathcal{F}$, by uniqueness of the Stinespring representation up to partial isometry~\eqref{eq:repmorphism}. The dual CP map $\mathcal{F}_*$ is trace preserving iff $\mathcal{F}$ is unital iff $V: K \to H \otimes E$ is an isometry. The dual of the dual CP map is the original CP map (we invite the reader to prove this  using a snake equation~\eqref{eq:snakes}).
\end{defn}

\subsubsection{Choi's theorem}

As a corollary of Stinespring's theorem, completely positive maps can be identified with positive elements in a certain f.d. $C^*$-algebra.
\begin{cor}[Choi's theorem]
Let $H: [1] \to [m]$ and $K: [1] \to [m]$ be 1-morphisms. 
There is a bijective correspondence (indeed, an isomorphism of convex cones) between CP maps $\mathcal{F}:\End{H} \to \End{K}$ and positive elements $\widetilde{F}$ in the f.d. $C^*$-algebra $\End{K \boxtimes \hat{H}}$, where $\hat{H}$ is the following 1-morphism:
\begin{align}\label{eq:hath}  
\includegraphics[valign=c]{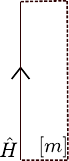}~~:=~~\includegraphics[valign=c]{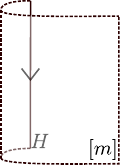}
\end{align}
This correspondence is as follows. Take any representation $(E,V)$ of $\mathcal{F}$. Then the corresponding positive element $\widetilde{\mathcal{F}} \in \End{K \boxtimes \hat{H}}$ is given by:
\begin{align}\label{eq:choi}
\includegraphics[valign=c]{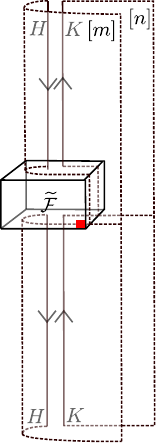}~~:=~~\includegraphics[valign=c]{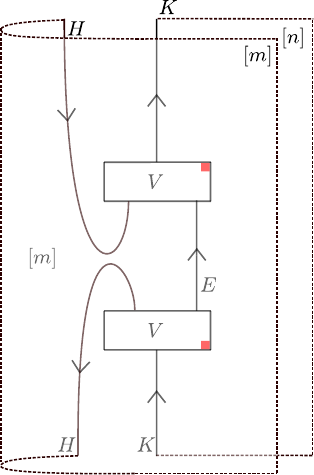}
\end{align}
This positive element is independent of the choice of representation, by~\eqref{eq:repmorphism}. The CP map is trace-preserving iff the positive element $\widetilde{\mathcal{F}} \in \End{K \boxtimes \hat{H}}$ obeys the following condition:
\begin{align}\label{eq:tpcond}
    \includegraphics[valign=c]{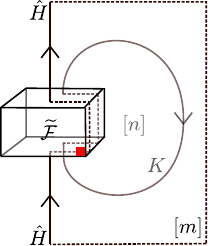}~~=~~\includegraphics[valign=c]{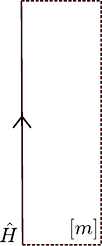}
\end{align}
Here we have used the right cap and left cup to trace out the $K$-wire on the LHS. The RHS depicts the identity 2-morphism on $\hat{H}$; recall that in the graphical calculus the identity 2-morphism is invisible. In inline notation, we write $\Tr_{K} (\Ftilde) = \id_{\hat{H}}$.
\end{cor}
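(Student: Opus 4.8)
The plan is to derive this as a corollary of the generalised Stinespring's theorem (Theorem~\ref{thm:stinespring}) by ``bending wires''. Stinespring already identifies CP maps $\mathcal{F}: \End{H} \to \End{K}$ with Stinespring representations $(E,V)$, $V: K \to H \otimes E$, modulo the partial-isometry equivalence~\eqref{eq:repmorphism}. So it suffices to set up a bijection, compatible with this equivalence, between such representations and positive elements of $\End{K \boxtimes \hat{H}}$, and then to check that it is a cone isomorphism and that the trace-preservation condition translates as claimed.

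First I would define the forward map and check it is well defined. Given a representation $(E,V)$, bend the $H$-output leg of $V$ down into a $\hat{H}$-input leg using the cups and caps that define $\hat{H}$ in~\eqref{eq:hath}, producing a 2-morphism $g_V: K \boxtimes \hat{H} \to E$; set $\widetilde{\mathcal{F}} := g_V^{\dagger} \circ g_V$, which is exactly the definition~\eqref{eq:choi} and is positive by the definition of a positive 2-morphism. The snake equations~\eqref{eq:snakes} show that bending is invertible and touches only the $H$- and $K$-legs, not the environment, so $V \mapsto g_V$ is a bijection between dilations with environment $E$ and 2-morphisms $K \boxtimes \hat{H} \to E$, natural in post-composition on $E$. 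Consequently, if $(E,V)$ and $(E',V')$ represent the same $\mathcal{F}$ and $\sigma$ is the partial isometry of~\eqref{eq:repmorphism}, then $g_{V'} = \sigma \circ g_V$ and $(\id_H \otimes \sigma^{\dagger}\sigma)\circ V = V$, whence $g_{V'}^{\dagger}\circ g_{V'} = g_V^{\dagger}\circ(\sigma^{\dagger}\sigma)\circ g_V = g_V^{\dagger}\circ g_V$; so $\widetilde{\mathcal{F}}$ depends only on $\mathcal{F}$. Surjectivity onto positive elements is then immediate: any positive $p \in \End{K \boxtimes \hat{H}}$ is $g^{\dagger}\circ g$ for some $g: K \boxtimes \hat{H}\to E$; unbend $g$ to a dilation $V_g$, put $\mathcal{F}_p(x) := V_g^{\dagger}\circ(x\otimes\id_E)\circ V_g$ (CP by Theorem~\ref{thm:stinespring}), and observe $\widetilde{\mathcal{F}_p}=p$. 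For injectivity one can either run the same argument in reverse — any two square roots of a fixed positive 2-morphism are related by a partial isometry on the target (which is just the uniqueness clause of Stinespring), hence so are the associated dilations, hence they represent the same CP map — or, more concretely, insert an arbitrary $x \in \End{H}$ into the $\hat{H}$-legs of~\eqref{eq:choi} and use~\eqref{eq:snakes} to read off $\mathcal{F}(x) = V^{\dagger}\circ(x\otimes\id_E)\circ V$ directly from $\widetilde{\mathcal{F}}$.

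It remains to check linearity of the bijection and the (TP) condition. Additivity follows because $(E_1\oplus E_2, V_1\oplus V_2)$ represents $\mathcal{F}_1+\mathcal{F}_2$ and the cap on $E_1\oplus E_2$ decomposes as a sum of caps, so $\widetilde{\mathcal{F}_1+\mathcal{F}_2}=\widetilde{\mathcal{F}_1}+\widetilde{\mathcal{F}_2}$; rescaling $V\mapsto\sqrt{\lambda}\,V$ handles positive scalars, so together with the bijection onto the positive cone this is an isomorphism of convex cones. For~\eqref{eq:tpcond}, substitute~\eqref{eq:choi} into its left-hand side and use the sliding equations~\eqref{eq:sliding} together with~\eqref{eq:snakes} to cancel the bent $H$-wires; tracing out the $K$-wire collapses $g_V^{\dagger}\circ g_V$ to the 2-morphism $H\to H$ obtained from $V$ and $V^{\dagger}$ by capping off the environment, reinterpreted as a 2-morphism $\hat{H}\to\hat{H}$ via the duality of~\eqref{eq:hath}. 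This equals $\id_{\hat{H}}$ precisely when that capped-off 2-morphism equals $\id_H$, which is exactly the trace-preservation condition for $\mathcal{F}$ — equivalently, unitality of the dual CP map $\mathcal{F}_*$ of Definition~\ref{def:dualcp}.

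I expect the main obstacle to be bookkeeping rather than conceptual: fixing the bending/unbending conventions for $\hat{H}$ in~\eqref{eq:hath} so that they are consistent with the duality data (right versus left cups and caps, and the cusps of~\eqref{eq:cusps}), so that the snake manipulations in the well-definedness, surjectivity and (TP) steps actually close up and no stray transposes or conjugates survive. The only genuinely non-formal input — that the square roots of a fixed positive 2-morphism differ by a partial isometry on the target — is itself a restatement of the uniqueness part of Theorem~\ref{thm:stinespring} (and is witnessed by the minimality criterion~\eqref{eq:minimalitycond}), so no idea beyond Stinespring's theorem is needed.
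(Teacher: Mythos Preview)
Your proposal is correct and follows essentially the same approach the paper takes. Note that the paper does not supply a formal proof for this corollary: well-definedness is dispatched in the statement itself by pointing to~\eqref{eq:repmorphism}, and the inverse construction (your surjectivity/injectivity step) is sketched in Remark~\ref{rem:recoverchan}, where the positive element is first transported to $\End{H^*\otimes K}$ before being split---this is exactly the ``bookkeeping'' you anticipate for the type of your $g_V$, and it is resolved there via the bend 1-morphisms and cusps rather than by bending directly into $K\boxtimes\hat H$.
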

\begin{notation}
From now on we will use this tilde notation to indicate the use of Choi's theorem; if $\mathcal{F}: \End{H} \to \End{K}$ is a CP map, then $\widetilde{F} \in \End{K \boxtimes \hat{H}}$ is the associated positive operator, which we will sometimes call the \emph{Choi operator} of the CP map. 
\end{notation}
\begin{rmk}\label{rem:recoverchan}
To recover a representation of the CP map $\mathcal{F}$ from the positive element $\Ftilde \in \End{K \boxtimes \hat{H}}$, we have the following procedure. We begin with the element $\Ftilde$:
\begin{align*}
\includegraphics[valign=c]{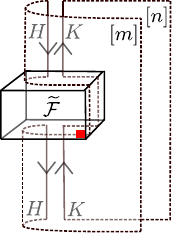}
\end{align*}
We now trace out the bend 1-morphism using its left cap and right cup (remember these are saddles) to obtain the following 2-morphism:
\begin{align*}
\includegraphics[valign=c]{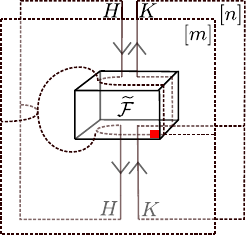}
\end{align*}
Finally, we close off the front plane by introducing a bend and its dual, together with a right cap and a left cup:
\begin{align}\label{eq:recovertechnique}
\includegraphics[valign=c]{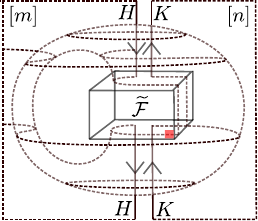}
\end{align}
This is a 2-morphism in $\End{H^* \otimes K}$. It is positive since $\Ftilde$ is, and therefore splits as $V^{\dagger} \circ V$ for some 1-morphism $E: [m] \to [n]$ and 2-morphism $V: H^* \otimes K \to E$:
\begin{align}\label{eq:recoverv}
\includegraphics[valign=c]{graphics/StinespringAndChoi/recover3.pdf}
=
\includegraphics[valign=c]{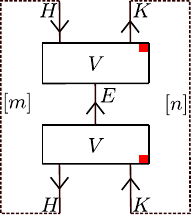}
\end{align}
The CP map $\mathcal{F}$ then has a representation with environment $E$ and the following dilation $K \to H \otimes E$:
\begin{align}\label{eq:recovertranspose}
\includegraphics[valign=c]{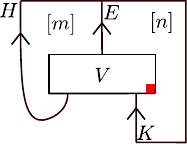}
\end{align}
\end{rmk}
\begin{rmk}
This algebra $\End{K \boxtimes \hat{H}}$ is what we called $\underline{\mathrm{Hom}}(\End{H},\End{K})$ in Definition~\ref{def:detsupermap}. The condition~\eqref{eq:tpcond} is what we called (TP) in Definition~\ref{def:detsupermap}.
\end{rmk}
\begin{rmk}
There are other isomorphic endomorphism algebras we could use for $\underline{\mathrm{Hom}}(\End{H},\End{K})$; for example, in~\cite[Thm. 4.13]{Verdon2022} the algebra $\End{H^* \otimes K}$ was used. The choice we have made here is more convenient for using Stinespring's theorem to dilate supermaps, since the 1-morphism $K \boxtimes \hat{H}$ has source object $[1]$. 
\end{rmk}

\section{Proof of realisation theorem}\label{sec:proof}

\begin{lem}[{C.f.~\cite[Lemma 1]{CDP08}}]\label{lem:1}
Let $H: [1] \to [m]$ and $K: [1] \to [n]$ be 1-morphisms, and let $C \in \End{K \boxtimes \hat{H}}$, where $\hat{H}$ is defined as in~\eqref{eq:hath}. If $\Tr[C \circ \Ftilde] = 1$ for all positive elements $\Ftilde \in \End{K \boxtimes \hat{H}}$ obeying the condition (TP) shown in~\eqref{eq:tpcond}, then $$C = \id_{K} \boxtimes \rho$$ for some $\rho \in \End{\hat{H}}$ such that $\Tr[\rho] = 1$. Or, in diagrams:
\begin{align*}
&\includegraphics[valign=c]{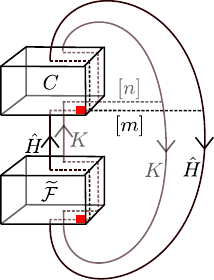} ~~=~~1~~\forall ~ \Ftilde \ge 0 :~~ \includegraphics[valign=c]{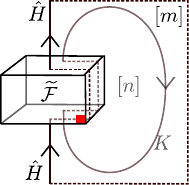}~~=~~\includegraphics[valign=c]{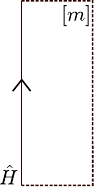}
\\
&\implies~~\includegraphics[valign=c]{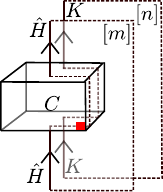}~~=~~
\includegraphics[valign=c]{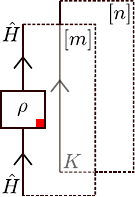}~,~~~~\includegraphics[valign=c]{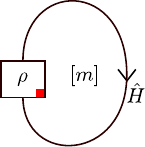}~~=~~1.
\end{align*}
Additionally, it follows that if $C \ge 0$, then $\rho \ge 0$. 
\end{lem}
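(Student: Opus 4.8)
The plan is to adapt the convexity argument of \cite[Lemma~1]{CDP08} to the present setting using the $C^*$-algebraic picture of Section~\ref{sec:stinespringchoi}. Write $\mathcal{A} := \End{K \boxtimes \hat{H}}$; under the block decomposition of endomorphism algebras this is naturally $\End{K} \otimes \End{\hat{H}}$, the ``trace out the $K$-wire'' operation appearing on the left-hand side of~\eqref{eq:tpcond} is the partial trace $\Tr_K : \mathcal{A} \to \End{\hat{H}}$, and $\Tr_K$ is adjoint, with respect to the nondegenerate bilinear trace pairings, to the map $y \mapsto \id_K \boxtimes y$; that is, $\Tr[\Tr_K(x) \circ y] = \Tr[x \circ (\id_K \boxtimes y)]$ for all $x \in \mathcal{A}$ and $y \in \End{\hat{H}}$ (a snake move around the $K$-wire). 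The first ingredient is a \emph{strictly} positive element obeying (TP): take $\Ftilde_0 := \dim(K)^{-1}\,\id_{\mathcal{A}}$ (the Choi operator of the channel $x \mapsto \dim(K)^{-1}\Tr[x]\,\id_K$). Since $\Tr_K(\id_K \boxtimes \id_{\hat{H}}) = \dim(K)\,\id_{\hat{H}}$, we get $\Tr_K(\Ftilde_0) = \id_{\hat{H}}$, so $\Ftilde_0$ obeys (TP); and it is positive and invertible.

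Next I would show that $W := \ker(\Tr_K) \subseteq \mathcal{A}$ is spanned by differences $\Ftilde_1 - \Ftilde_2$ of positive elements obeying (TP). As $\Tr_K$ is $*$-preserving, $W$ is closed under the dagger, so it suffices to treat a self-adjoint $w \in W$; then, $\Ftilde_0$ being invertible, $\Ftilde_0 \pm \varepsilon w$ is positive for all sufficiently small $\varepsilon > 0$ and still obeys (TP), and $w = (2\varepsilon)^{-1}\big((\Ftilde_0 + \varepsilon w) - (\Ftilde_0 - \varepsilon w)\big)$. Now invoke the hypothesis: $\Tr[C \circ \Ftilde] = 1$ for \emph{every} positive $\Ftilde$ obeying (TP), so $\Tr[C \circ (\Ftilde_1 - \Ftilde_2)] = 0$ for any two such, and hence $\Tr[C \circ w] = 0$ for all $w \in W$ by linearity. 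Thus $C$ lies in the orthogonal complement of $\ker(\Tr_K)$ for the trace pairing, which by the adjunction above is exactly the image of $y \mapsto \id_K \boxtimes y$. Therefore $C = \id_K \boxtimes \rho$ for a unique $\rho \in \End{\hat{H}}$, namely $\rho = \dim(K)^{-1}\,\Tr_K(C)$.

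It remains to pin down $\Tr[\rho]$ and the positivity claim. Evaluating the hypothesis at $\Ftilde_0$ and using multiplicativity of the trace over $\boxtimes$ gives $1 = \Tr[C \circ \Ftilde_0] = \dim(K)^{-1}\,\Tr[\id_K \boxtimes \rho] = \dim(K)^{-1}\,\Tr[\id_K]\,\Tr[\rho] = \Tr[\rho]$. Finally, if $C \ge 0$ then $\rho = \dim(K)^{-1}\,\Tr_K(C) \ge 0$, since the partial trace is a positive (indeed completely positive) map; equivalently, testing $\id_K \boxtimes \rho \ge 0$ against product vectors forces $\rho \ge 0$. I expect the only genuinely fiddly point to be the bookkeeping in the first paragraph — verifying that the diagrammatic ``trace out $K$'' really is the partial trace, that $\Ftilde_0$ obeys (TP), and that $(\ker \Tr_K)^{\perp}$ is the span of the $\id_K \boxtimes y$; all three are short graphical computations (snake and sliding moves around the $K$-wire), after which the rest is elementary convexity and linear algebra. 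I would render these three computations as 2-morphism diagrams and keep the remainder in prose.
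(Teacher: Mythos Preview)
Your proposal is correct and follows essentially the same strategy as the paper's proof: both show that $C$ is orthogonal (for the trace pairing) to every element of $\ker(\Tr_K)$ and hence lies in the range of $y \mapsto \id_K \boxtimes y$. The paper packages this slightly differently --- for an arbitrary positive $x$ it constructs $\Ftilde := x + \sigma \boxtimes \id_{\hat H} - \sigma \boxtimes \Tr_K(x)$ (with $\sigma$ any trace-$1$ state on $K$) and compares $\Tr[C\circ\Ftilde]$ with $\Tr[C\circ(\sigma\boxtimes\id_{\hat H})]$, arriving directly at $\Tr[C\circ x]=\Tr[(\id_K\boxtimes\rho)\circ x]$ for all $x$, where $\rho$ is the $\sigma$-slice of $C$ --- whereas you go via the abstract identification $(\ker\Tr_K)^{\perp}=\{\id_K\boxtimes y\}$ using the specific interior point $\Ftilde_0=\dim(K)^{-1}\id$. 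Your perturbation step $\Ftilde_0\pm\varepsilon w$ handles the positivity requirement on $\Ftilde$ a little more transparently than the paper's construction (where positivity of the displayed $\Ftilde$ is asserted but can fail without first scaling $x$), but otherwise the two arguments are interchangeable.
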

\begin{proof}
Choose any positive $\sigma \in \End{K}$ such that $\Tr[\sigma] = 1$: 
     \begin{align*}
          \includegraphics[valign=c]{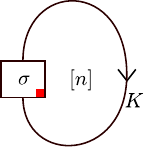}
          ~~=~~1
\end{align*}
    Given any positive element $x \in \End{K \boxtimes \hat{H}}$ (not necessarily obeying the condition (TP)), define the following positive element $\Ftilde \in \End{K \boxtimes \hat{H}}$: 
    \begin{align*}
    \includegraphics[valign=c]{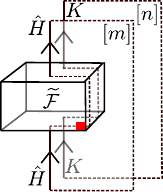}
    ~~:=~~&
    \includegraphics[valign=c]{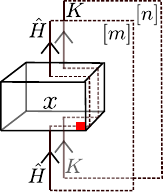}
    ~~+~~
    \includegraphics[valign=c]{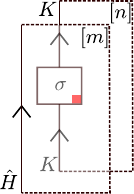}\\&-~~
    \includegraphics[valign=c]{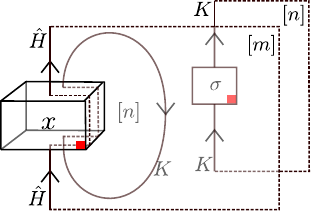}
\end{align*}
It is straightforwardly seen that this positive element $\Ftilde$ obeys the (TP) condition~\eqref{eq:tpcond}:
\begin{align*}
    \includegraphics[valign=c]{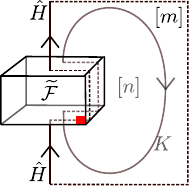}
    ~~=~~&
    \includegraphics[valign=c]{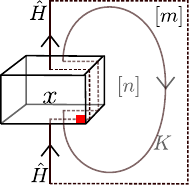}
    ~~+~~
    \includegraphics[valign=c]{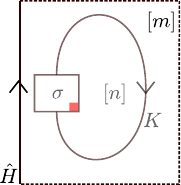}
    \\&-~~
    \includegraphics[valign=c]{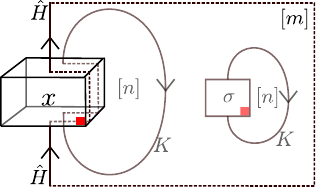}
\end{align*}
\begin{align*}
    &=~~\includegraphics[valign=c]{graphics/theorem1/xtrace.pdf}
    ~~+~~
    \includegraphics[valign=c]{graphics/theorem1/idh.pdf}
    ~~-~~
    \includegraphics[valign=c]{graphics/theorem1/xtrace.pdf}
\end{align*}
\begin{align*}
    =~~
    \includegraphics[valign=c]{graphics/theorem1/idh.pdf}
\end{align*}
Here for the second equality we used that $\Tr(\sigma)=1$.

Composing $\Ftilde$ with $C$ and taking the trace, we obtain (by definition of $\Ftilde$):
\begin{align}\nonumber
\includegraphics[valign=c,scale=1]{graphics/theorem1/cftrace.pdf}
~~&=~~
\includegraphics[valign=c,scale=1]{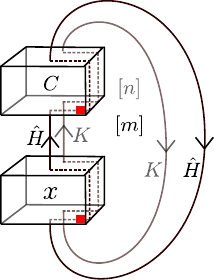}
~~+~~
\includegraphics[valign=c,scale=1]{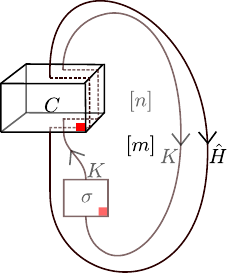}
\\\label{eq:ftildedef}&-~~
\includegraphics[valign=c,scale=1]{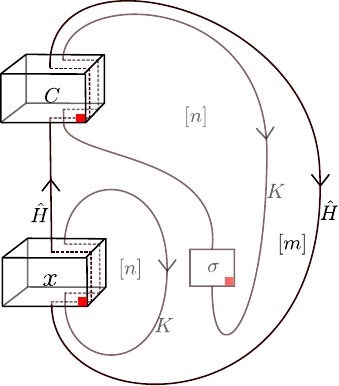}
\end{align}
$\Ftilde$ and $ \sigma \boxtimes \id_{\hat{H}}$ both obey the (TP) condition~\eqref{eq:tpcond}, so, by the assumption on $C$ in the statement of the theorem: 
\begin{align*}
\includegraphics[valign=c,scale=1]{graphics/theorem1/cftrace.pdf}
~~=~~1~~=~~
\includegraphics[valign=c,scale=1]{graphics/theorem1/csigmaboxidtrace.pdf}
\end{align*}
Substituting this into~\eqref{eq:ftildedef} yields:
\begin{align}\label{eq:c}
\includegraphics[valign=c]{graphics/theorem1/cxtrace.pdf}
~~=~~\includegraphics[valign=c]{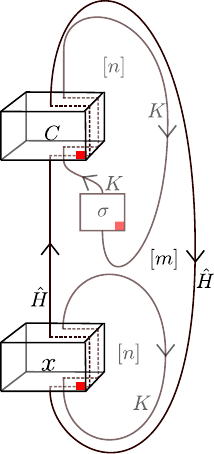}
\end{align}
The above is true for all $x \ge 0$; hence, since every element of $\End{K \boxtimes \hat{H}}$ is a linear combination of positive elements, it is true for all $x \in \End{K \boxtimes \hat{H}}$. Observe that the sesquilinear form $\braket{x,y} := \Tr(x^{\dagger} y)$ defines an isomorphism $\phi: \End{K \boxtimes \hat{H}} \overset{\sim}{\to} \overline{\End{K \boxtimes \hat{H}}^*}$ by $\phi(x)(y) = \Tr(x^{\dagger} y)$. Then~\eqref{eq:c} implies the equality
\begin{align}\label{eq:csplits}
\includegraphics[valign=c]{graphics/theorem1/c.pdf}~~=~~\includegraphics[valign=c]{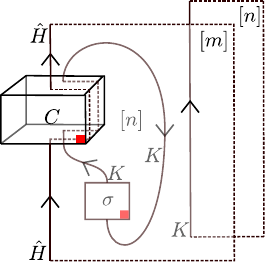},
\end{align}
since these elements have the same image under the isomorphism $\phi$.
This is the decomposition of $C$ in the theorem statement, where 
\begin{align*}
\includegraphics[valign=c]{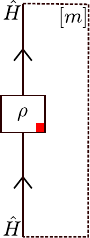}~~:=~~\includegraphics[valign=c]{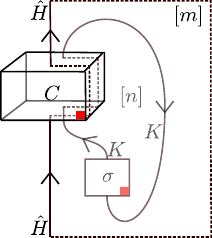},
\end{align*}
In order to see that $\rho$ is positive if $C$ is, observe from~\eqref{eq:csplits} that 
\begin{align*}
    \includegraphics[valign=c]{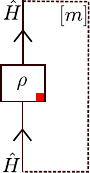}~~=~~\frac{1}{\dim(K)}~~\includegraphics[valign=c]{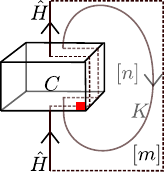}~,
\end{align*}
which is clearly the composition of a 2-morphism with its dagger, since $C$ is positive. 
\end{proof}

\begin{defn}[Formal restatement of Definition~\ref{def:detsupermap}]\label{def:detsupermapformal}
Let \begin{align*}
\Hin: [1] \to [m], && \Hout: [1] \to [n], && \Kin: [1] \to [m'], && \Kout: [1] \to [n']
\end{align*}
be 1-morphisms. Recall from Section~\ref{sec:stinespringchoi} that each completely positive map $\mathcal{F}: \End{\Hin} \ra \End{\Hout}$ corresponds to a positive element $\widetilde{\mathcal{F}}: \End{\Hout \boxtimes \widehat{\Hin}}$, called its Choi operator (defined in~\eqref{eq:choi}).

A \emph{deterministic supermap} from channels (i.e. completely positive trace-preserving maps) of type $\End{\Hin} \to \End{\Hout}$ to channels of type $\End{\Kin} \to \End{\Kout}$ is a completely positive map \begin{equation*}
\Stilde : \End{\Hout \boxtimes \widehat{\Hin}} \ra \End{\Kout \boxtimes \widehat{\Kin}}
\end{equation*}
which maps Choi operators obeying the (TP) condition~\eqref{eq:tpcond} to Choi operators obeying the (TP) condition:
\begin{equation}\label{eq:tpp}
    \Tr_{\Hout}[\Ftilde] = \id_{\Hin^*} \implies \Tr_{\Kout}[\Stilde(\Ftilde)] = \id_{\Kin^*}.
\end{equation}
\end{defn}
\noindent
We use Stinespring's theorem (Theorem~\ref{thm:stinespring}) to obtain a dilation $(E,S)$ for the CP map $\mathcal{S}$. Using this dilation the condition~\eqref{eq:tpp} can be written diagrammatically as follows:
\begin{align*}
\includegraphics[valign=c]{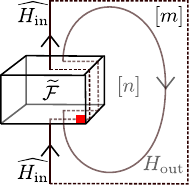}~~=~~
\includegraphics[valign=c]{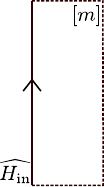} \end{align*}
\begin{align*}
\implies~~ 
\includegraphics[valign=c]{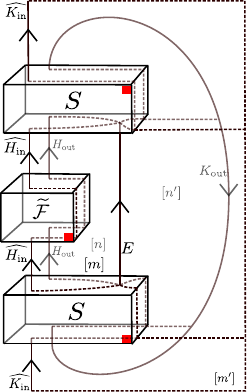}~~=~~
\includegraphics[valign=c]{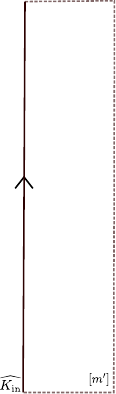}
\end{align*}
For the next lemma, recall the definition of the dual of a CP map (Definition~\ref{def:dualcp}).
\begin{lem}[{C.f.~\cite[Lemma 2]{CDP08}}]\label{lem:2}
Let 
\begin{align*}
\Hin: [1] \to [m],&& \Hout: [1] \to [n],&& \Kin: [1] \to [m'],&& \Kout: [1] \to [n']
\end{align*}
be 1-morphisms, and let $\Stilde : \End{\Hout \boxtimes \widehat{\Hin}} \ra \End{\Kout \boxtimes \widehat{\Kin}}$ be a deterministic supermap. 

Then there exists a unital CP map $\mathcal{N}: \End{\widehat{\Hin}} \ra \End{\widehat{\Kin}}$ such that, for any positive $\rho \in \End{\widehat{\Kin}}$ such that $\Tr(\rho) = 1$:
\begin{equation}
    \Stilde_* (\id_{\Kout} \boxtimes \rho) 
    = \id_{\Hout} \boxtimes \mathcal{N}_*(\rho) ,
\end{equation}
or, in diagrams (where $(E_N,N)$ is a minimal Stinespring representation of the unital CP map $\mathcal{N}$):
\begin{align*}
\includegraphics[valign=c]{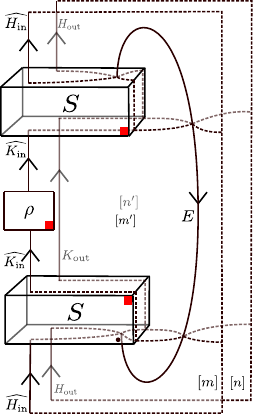}
~~=~~
\includegraphics[valign=c]{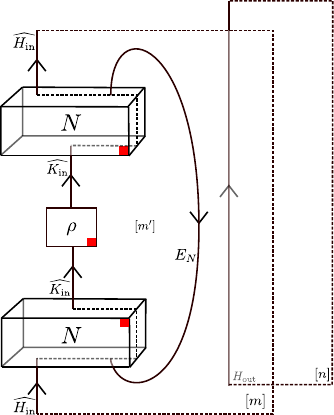}
\end{align*}
\end{lem}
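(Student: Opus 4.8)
The plan is to imitate the argument of \cite[Lemma 2]{CDP08}, translated into the graphical calculus of $\TwoFHilb$. The starting point is the dual supermap $\Stilde_*: \End{\Kout \boxtimes \widehat{\Kin}} \to \End{\Hout \boxtimes \widehat{\Hin}}$, which exists by Definition~\ref{def:dualcp} applied to the Stinespring dilation $(E,S)$ of $\mathcal{S}$. The trace-preserving-preserving condition~\eqref{eq:tpp}, dualised, says that $\Stilde_*$ sends any element obeying a certain ``co-(TP)'' equation to an element obeying the analogous equation; concretely, feeding in $\id_{\Kout} \boxtimes \rho$ for a density element $\rho \in \End{\widehat{\Kin}}$ should produce something of the form $\id_{\Hout} \boxtimes (\text{something on } \widehat{\Hin})$. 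To prove that $\Stilde_*(\id_{\Kout} \boxtimes \rho)$ really does factor through $\id_{\Hout}$ on the $\Hout$-wire, I would apply Lemma~\ref{lem:1}: I need to check that $\Tr[(\Stilde_*(\id_{\Kout} \boxtimes \rho)) \circ \Ftilde] = 1$ for every positive $\Ftilde \in \End{\Hout \boxtimes \widehat{\Hin}}$ obeying (TP). By the defining adjunction relation between $\Stilde$ and $\Stilde_*$ (i.e. $\Tr[\Stilde_*(y) \circ x] = \Tr[y \circ \Stilde(x)]$, up to the appropriate placement of caps/cups coming from Definition~\ref{def:dualcp}), this trace equals $\Tr[(\id_{\Kout} \boxtimes \rho) \circ \Stilde(\Ftilde)]$. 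Since $\Stilde(\Ftilde)$ obeys (TP) by hypothesis on $\mathcal{S}$, and $\Tr(\rho) = 1$, this evaluates to $1$: tracing out $\Kout$ against $\Stilde(\Ftilde)$ gives $\id_{\widehat{\Kin}}$ by (TP), and then closing up against $\rho$ gives $\Tr(\rho) = 1$. Hence Lemma~\ref{lem:1} (in its dual/transposed form) yields $\Stilde_*(\id_{\Kout} \boxtimes \rho) = \id_{\Hout} \boxtimes \tau(\rho)$ for a unique $\tau(\rho) \in \End{\widehat{\Hin}}$ with $\Tr[\tau(\rho)] = 1$, and $\tau(\rho) \ge 0$ whenever $\rho \ge 0$ since $\Stilde_*$ is CP.

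Next I would verify that the assignment $\rho \mapsto \tau(\rho)$, extended by linearity from density elements to all of $\End{\widehat{\Kin}}$, is a completely positive \emph{trace-preserving} map $\End{\widehat{\Kin}} \to \End{\widehat{\Hin}}$. Trace preservation is immediate from $\Tr[\tau(\rho)] = \Tr[\rho] = 1$ on a spanning set of states. Complete positivity follows because $x \mapsto \id_{\Hout} \boxtimes \tau(x)$ agrees with $\Stilde_* (\id_{\Kout} \boxtimes -)$, which is a composite of CP maps (first tensoring with $\id_{\Kout}$, which is CP, then applying the CP map $\Stilde_*$), and ``de-tensoring'' by $\id_{\Hout}$ — e.g. conjugating by a fixed unit vector of $\Hout$, or using that $\End{\widehat{\Hin}}$ is a corner of $\End{\Hout \boxtimes \widehat{\Hin}}$ — preserves complete positivity. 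I then set $\mathcal{N} := (\tau)_*$, the dual CP map in the sense of Definition~\ref{def:dualcp}; since $\tau$ is trace-preserving, its dual $\mathcal{N}$ is unital, and $\mathcal{N}_* = \tau$ (the dual of the dual is the original, as noted after Definition~\ref{def:dualcp}). This gives exactly the claimed identity $\Stilde_*(\id_{\Kout} \boxtimes \rho) = \id_{\Hout} \boxtimes \mathcal{N}_*(\rho)$. Finally, to obtain the diagrammatic form of the statement, I would take a minimal Stinespring representation $(E_N, N)$ of $\mathcal{N}$ and unfold $\mathcal{N}_*(\rho)$ using the definition of the dual dilation $V_*$ from Definition~\ref{def:dualcp}; the diagram in the statement is precisely $\id_{\Hout} \boxtimes (N^\dagger \circ (\rho \otimes \id_{E_N^*}) \circ N$-type expression) with the bends coming from the $\widehat{\cdot}$ construction.

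The main obstacle I anticipate is purely bookkeeping: getting the adjunction identity between $\Stilde$ and $\Stilde_*$ right at the diagrammatic level, since Definition~\ref{def:dualcp} defines $\Stilde_*$ via a specific arrangement of the dilation $S$ with caps, cups and conjugation on the environment, and one must check that the cyclicity of the trace together with the snake equations~\eqref{eq:snakes} really does convert $\Tr[\Stilde_*(\id_{\Kout}\boxtimes\rho)\circ\Ftilde]$ into $\Tr[(\id_{\Kout}\boxtimes\rho)\circ\Stilde(\Ftilde)]$ with no stray factors of $\dim$ or misplaced daggers. A secondary subtlety is that Lemma~\ref{lem:1} is stated for elements of $\End{K \boxtimes \hat H}$ with the trace pairing against (TP) elements on the \emph{same} algebra, whereas here I am applying it on the ``input side'' algebra $\End{\Hout \boxtimes \widehat{\Hin}}$; I should make sure the roles of $\Kout$ vs.\ $\Hout$ and the relevant ``TP'' conditions are consistently matched when invoking it. Once those identifications are pinned down, the rest is a routine application of Lemma~\ref{lem:1} and the duality formalism of Section~\ref{sec:stinespringchoi}.
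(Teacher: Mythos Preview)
Your proposal is correct and follows essentially the same route as the paper's proof: both verify $\Tr[\Stilde_*(\id_{\Kout}\boxtimes\rho)\circ\Ftilde]=1$ for all (TP) $\Ftilde$ by passing through the duality identity $\Tr[\Stilde_*(y)\circ x]=\Tr[y\circ\Stilde(x)]$ (which the paper establishes diagrammatically via the sliding equations, confirming there are no stray dimension factors), then invoke Lemma~\ref{lem:1} directly on $\End{\Hout\boxtimes\widehat{\Hin}}$ to obtain the $\id_{\Hout}\boxtimes\sigma$ form, and finally package $\rho\mapsto\sigma$ as a CPTP map $\mathcal{N}_*$ whose complete positivity comes from writing it as a composite of CP maps (tensoring by an identity, applying $\Stilde_*$, and a normalised partial trace over $\Hout$). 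Your anticipated obstacles are exactly the bookkeeping steps the paper carries out, and they resolve without incident.
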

\begin{proof}
Let $\rho \in \End{\widehat{\Kin}}$ be any positive element satisfying $\Tr(\rho) = 1$. Now let $C := \id_{\Kout} \boxtimes \rho \in \End{\Kout \boxtimes \widehat{\Kin}}$. Clearly, $C$ is also positive. 
\begin{align*}
\includegraphics[valign=c]{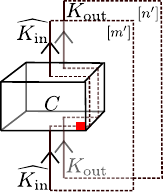}~~:=~~\includegraphics[valign=c]{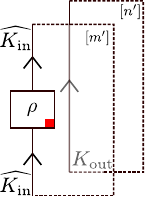}
\end{align*}
Then, for any positive $\Ftilde \in \End{\Hout \boxtimes \widehat{\Hin}}$ satisfying the (TP) condition~\eqref{eq:tpcond}, we have 
    \begin{equation}\label{eq:lem21}
        1 = \Tr[\Stilde_* (C) \circ \Ftilde],
    \end{equation}
which is seen by the following equation:
\begin{align*}1~~=~~\includegraphics[valign=c,scale=1]{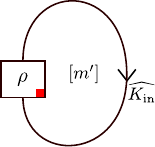}
~~=~~
\includegraphics[valign=c,scale=1]{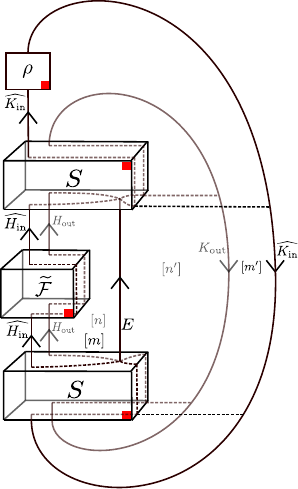}
\end{align*}
\begin{align}\label{eq:firstsliding}
=\includegraphics[valign=c,scale=1]{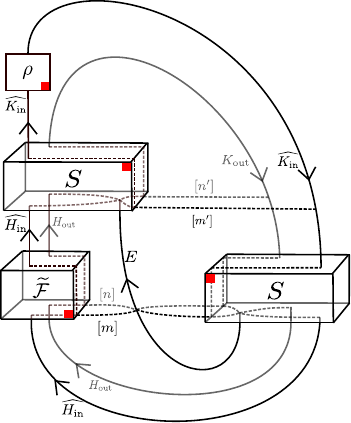}
~~=~~
\includegraphics[valign=c,scale=1]{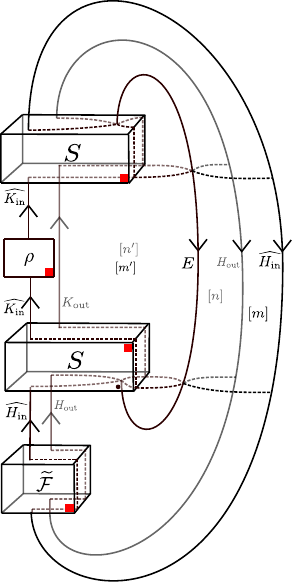}
\end{align}
Here the first equality is by $\Tr(\rho) = 1$; the second equality is by the fact that $\mathcal{S}$ preserves the (TP) condition~\eqref{eq:tpcond}; the third equality is by pulling the bottom $S$-box around the cup using the sliding equations~\eqref{eq:sliding}; and the final equality is by sliding that transposed $S$-box around the cap using the sliding equations~\eqref{eq:sliding}. In the last diagram we indeed see $\Tr[\mathcal{S}_*(C) \circ \Ftilde]$ and so~\eqref{eq:lem21} is proved.
We now apply Lemma~\ref{lem:1} to obtain
    \begin{equation}
        \Stilde_* (\id_{\Kout} \boxtimes \rho) = \id_{\Hout} \boxtimes \sigma,
    \end{equation}
    where $\Tr[\sigma] = 1$ and $\sigma \geq 0$. 

As $\rho \mapsto \rho \ot \id_{\Kout}$, $\Stilde_*$ and $ \sigma \ot \id_{\Hout} \mapsto \sigma$ are all CP, it follows that $\sigma = \mathcal{N}_*(\rho)$, where $\mathcal{N}_* : \End{\widehat{\Kin}} \ra \End{\widehat{\Hin}}$ is a CP map defined as follows:
\begin{align*}
\includegraphics[valign=c,scale=1]{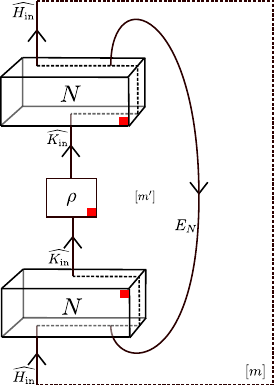}
~~:=~~
\frac{1}{\dim(\Hout)}~
\includegraphics[valign=c,scale=1]{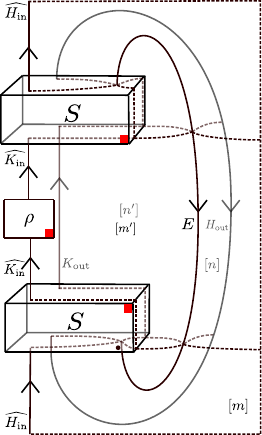}
\end{align*}
To see that $\mathcal{N}_*$ is trace preserving it is sufficient to note that it preserves the trace on positive elements of trace 1; it therefore preserves the trace on all elements of $\End{\widehat{\Kin}}$ by linearity of the trace. Since $\mathcal{N}_*$ is trace-preserving, $\mathcal{N}$ is unital and the result follows.
\end{proof}
\begin{lem}[{C.f.~\cite[Lemma 3]{CDP08}}]\label{lem:3}
Let \begin{align*}
    \Hin: [1] \to [m], && \Hout: [1] \to [n], && \Kin: [1] \to [m'], && \Kout: [1] \to [n']
\end{align*} be 1-morphisms, and let $\Stilde : \End{\Hout \boxtimes \widehat{\Hin}} \ra \End{\Kout \boxtimes \widehat{\Kin}}$ be a deterministic supermap. Then there exists an unital CP map $\mathcal{N} : \End{\widehat{\Hin}} \ra \End{\widehat{\Kin}}$ such that, for any positive $C \in \End{\Hout \boxtimes \widehat{\Hin}}$,
\begin{equation}\label{eq:lemma3}
    \Tr_{\Kout}[\Stilde(C)] = \mathcal{N} (\Tr_{\Hout} [C]),
\end{equation}
or, in diagrams:
\begin{align}\label{eq:lem3}
\includegraphics[valign=c,scale=1]{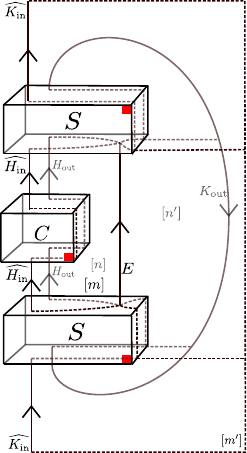}~~=~~\includegraphics[valign=c,scale=1]{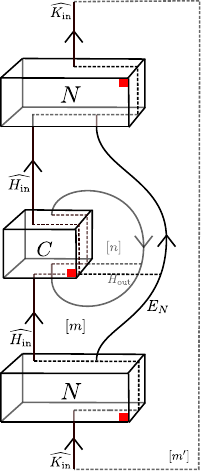}
\end{align}
\end{lem}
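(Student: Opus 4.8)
The plan is to obtain this as the trace-dual of Lemma~\ref{lem:2}; the unital CP map $\mathcal{N}$ will be exactly the one produced there. For fixed positive $C\in\End{\Hout\boxtimes\widehat{\Hin}}$, equation~\eqref{eq:lemma3} asserts an equality of two self-adjoint elements of $\End{\widehat{\Kin}}$ (both are positive, being built from $C$ by CP maps and partial traces), so by nondegeneracy of the trace pairing it suffices to show that they pair identically with every $\rho\in\End{\widehat{\Kin}}$; and since the positive trace-$1$ elements span $\End{\widehat{\Kin}}$ linearly, it is enough to treat positive $\rho$ with $\Tr[\rho]=1$.

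First I would record the two routine identities that move the relevant partial traces and CP maps across the trace pairing. The \emph{partial-trace identity}: $\Tr[X\circ(\id_{\Hout}\boxtimes\rho)]=\Tr[\Tr_{\Hout}[X]\circ\rho]$ for $X\in\End{\Hout\boxtimes\widehat{\Hin}}$ and $\rho\in\End{\widehat{\Hin}}$, and likewise with $(\Kout,\Kin)$ in place of $(\Hout,\Hin)$; diagrammatically this is just closing the $\Hout$-loop with a cup and a cap, and follows from the snake equations~\eqref{eq:snakes}. The \emph{trace-duality of the dual CP map}: $\Tr[\mathcal{F}(x)\circ y]=\Tr[x\circ\mathcal{F}_*(y)]$ for any CP map $\mathcal{F}$ and its dual $\mathcal{F}_*$ (Definition~\ref{def:dualcp}), which is immediate from the diagrammatic definition of $\mathcal{F}_*$ by sliding boxes around a cup and a cap exactly as in~\eqref{eq:firstsliding} — it says precisely that $\mathcal{F}_*$ is the Hilbert--Schmidt adjoint of $\mathcal{F}$.

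Then the argument is the following chain, for positive $C$ and positive $\rho$ with $\Tr[\rho]=1$:
\begin{align*}
\Tr\big[\Tr_{\Kout}[\Stilde(C)]\circ\rho\big]
&=\Tr\big[\Stilde(C)\circ(\id_{\Kout}\boxtimes\rho)\big]
=\Tr\big[C\circ\Stilde_*(\id_{\Kout}\boxtimes\rho)\big]\\
&=\Tr\big[C\circ(\id_{\Hout}\boxtimes\mathcal{N}_*(\rho))\big]
=\Tr\big[\Tr_{\Hout}[C]\circ\mathcal{N}_*(\rho)\big]
=\Tr\big[\mathcal{N}(\Tr_{\Hout}[C])\circ\rho\big],
\end{align*}
where the five steps use, in order: the partial-trace identity; trace-duality for $\Stilde$; Lemma~\ref{lem:2} (applicable because $\rho$ is positive of trace $1$, giving $\Stilde_*(\id_{\Kout}\boxtimes\rho)=\id_{\Hout}\boxtimes\mathcal{N}_*(\rho)$); the partial-trace identity again; and trace-duality for $\mathcal{N}$. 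As $\rho$ ranges over the positive trace-$1$ elements of $\End{\widehat{\Kin}}$, nondegeneracy of the trace form yields $\Tr_{\Kout}[\Stilde(C)]=\mathcal{N}(\Tr_{\Hout}[C])$, which is~\eqref{eq:lemma3}; unitality of $\mathcal{N}$ is inherited directly from Lemma~\ref{lem:2}, and the identity extends from positive $C$ to all $C$ by linearity.

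There is no deep obstacle here; the only care needed is in keeping the duality conventions straight — in particular, checking that the $\mathcal{N}$ characterised in Lemma~\ref{lem:2} through $\mathcal{N}_*$ acting on elements of the form $\id_{\Kout}\boxtimes\rho$ is genuinely the map whose action on $\Tr_{\Hout}[C]$ appears in~\eqref{eq:lem3}, and that the partial-trace and trace-duality identities are applied with the correct wires being traced. If one prefers to remain in the graphical calculus, steps (i)--(ii) and the chain above should simply be redrawn as 2-morphism diagrams, with each equality justified by a snake move~\eqref{eq:snakes} or a sliding move~\eqref{eq:sliding}, just as in the proof of Lemma~\ref{lem:2}.
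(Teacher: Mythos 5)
Your proof is correct and is essentially the paper's own argument: the paper likewise pairs both sides with a positive trace-one $\rho$, slides the $S$-box around the loop (your trace-duality step for $\Stilde$), applies Lemma~\ref{lem:2}, slides the $N$-box back (trace-duality for $\mathcal{N}$), and concludes by nondegeneracy of the trace pairing together with the span of positive elements. The only difference is presentational — you write the chain algebraically via explicit partial-trace and Hilbert--Schmidt-adjoint identities, whereas the paper performs the same moves graphically with the sliding equations~\eqref{eq:sliding}.
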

\begin{proof}
Let $\rho \in \End{\widehat{\Kin}}$ be any positive element satisfying $\Tr[\rho] = 1$. The following equation shows that $\Tr[\rho \circ \Tr_{\Kout}[\mathcal{S}(C)]] = \Tr[\rho \circ \mathcal{N}(\Tr_{\Hout}[C])]$:
\begin{align*}
\includegraphics[valign=c,scale=1]{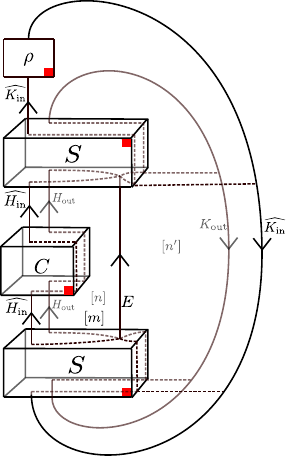}~~=~~
\includegraphics[valign=c,scale=1]{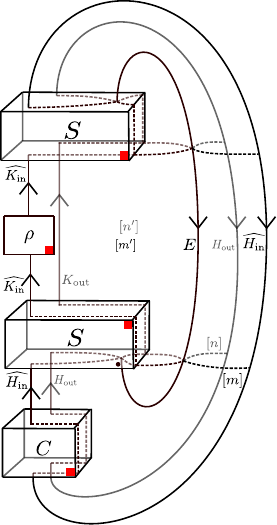}
\end{align*}
\begin{align*}
=~~\includegraphics[valign=c,scale=1]{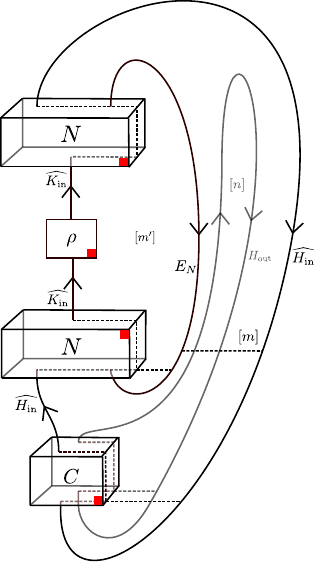}
~~=~~
\includegraphics[valign=c,scale=1]{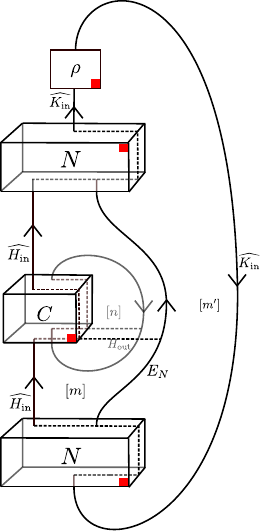}
\end{align*}
Here for the first equality we used the sliding equations~\eqref{eq:sliding} to pull the bottom $S$-box round the loop, as we did in~\eqref{eq:firstsliding}; for the second equality we applied Lemma~\ref{lem:2}; and for the third equality we used the sliding equations~\eqref{eq:sliding} to pull the top $N$-box round the loop. 

By linearity of the trace and decomposition of an arbitrary element of $\End{\widehat{\Kin}}$ as a linear combination of positive elements, it follows that $\Tr[x \circ \Tr_{\Kout}[\mathcal{S}(C)]] = \Tr[x \circ \mathcal{N}(\Tr_{\Hout}[C])]$ for any $x \in \End{\widehat{\Kin}}$. It follows that $\Tr_{\Kout}[\mathcal{S}(C)] = \mathcal{N}(\Tr_{\Hout}[C])$ by a similar argument to that made at the end of the proof of Lemma~\ref{lem:1}.
\end{proof}

\begin{thm}[{C.f.~\cite[Theorem 1]{CDP08}}]\label{thm:1}
    Let 
    \begin{align*}
        \Hin: [1] \to [m], && \Hout: [1] \to [n], && \Kin: [1] \to [m'], && \Kout: [1] \to [n']
    \end{align*}
    be 1-morphisms, and let $\Stilde : \End{\Hout \boxtimes \widehat{\Hin}} \ra \End{\Kout \boxtimes \widehat{\Kin}}$ be a deterministic supermap. Then $\Stilde$ can be decomposed in the following way, where $N,W$ are isometries:
\begin{align}\label{eq:thm1}\includegraphics[valign=c,scale=1]{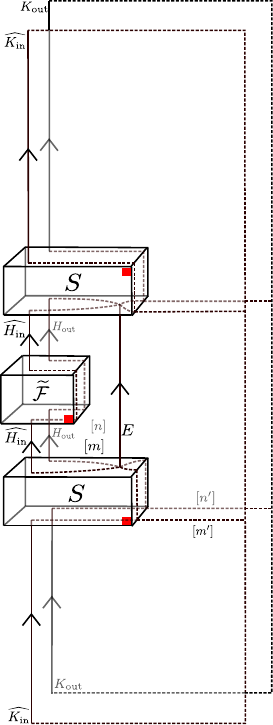}~~=~~\includegraphics[valign=c,scale=1]{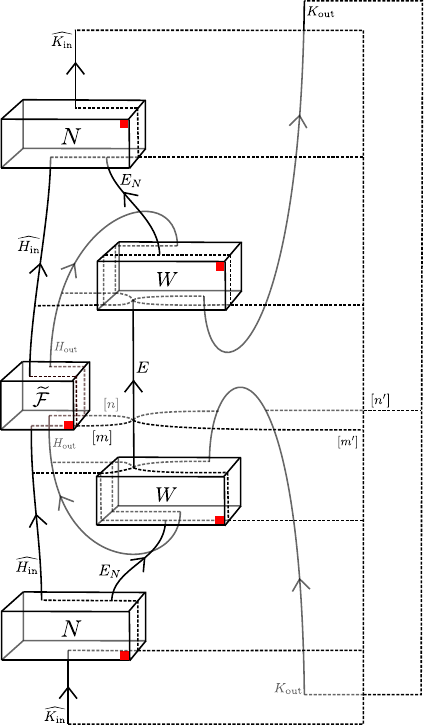}
\end{align}
\end{thm}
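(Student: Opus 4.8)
The plan is to follow the proof of~\cite[Thm. 1]{CDP08}, combining Lemma~\ref{lem:3} with the uniqueness-up-to-partial-isometry clause of the generalised Stinespring theorem (Theorem~\ref{thm:stinespring}). First, Lemma~\ref{lem:3} produces a unital CP map $\mathcal{N}: \End{\widehat{\Hin}} \ra \End{\widehat{\Kin}}$ with $\Tr_{\Kout}[\Stilde(C)] = \mathcal{N}(\Tr_{\Hout}[C])$ for all positive $C$, hence for all $C \in \End{\Hout \boxtimes \widehat{\Hin}}$ by linearity. Fix a minimal Stinespring representation $(E_N,N)$ of $\mathcal{N}$; since $\mathcal{N}$ is unital, $N: \widehat{\Kin} \ra \widehat{\Hin} \ot E_N$ is an isometry. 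Also fix any Stinespring representation $(E,S)$ of $\Stilde$, with $S: \Kout \boxtimes \widehat{\Kin} \ra (\Hout \boxtimes \widehat{\Hin}) \ot E$ and $\Stilde(x) = S^{\dagger} \circ (x \ot \id_E) \circ S$; where convenient one can take it minimal.

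The core step is to compute two Stinespring representations of the single CP map $\Phi: \End{\Hout \boxtimes \widehat{\Hin}} \ra \End{\widehat{\Kin}}$, $\Phi(x) := \Tr_{\Kout}[\Stilde(x)]$. On one hand, substituting the dilation $S$ and carrying out the partial trace $\Tr_{\Kout}$ amounts diagrammatically to bending the $\Kout$-leg in the source of $S$ around into the environment; this gives a representation of $\Phi$ whose dilation is built from $S$ and a cup on $\Kout$. On the other hand $\Phi(x) = \mathcal{N}(\Tr_{\Hout}[x])$, and $\Tr_{\Hout}$ has the evident representation whose dilation is a cup on $\Hout$; composing this with $(E_N,N)$ gives a second representation of $\Phi$, whose dilation is built from $N$ and a cup on $\Hout$. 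By~\eqref{eq:repmorphism} these two representations are related by a partial isometry $\sigma$ between the environments; pulling the cups straight with the snake equations~\eqref{eq:snakes} and sliding equations~\eqref{eq:sliding} then rewrites $S$ as an explicit diagram assembled from $N$, $\sigma$, and cups and caps on $\Hout$ and $\Kout$. Substituting this back into $\Stilde(x) = S^{\dagger} \circ (x \ot \id_E) \circ S$ and simplifying with the snake and sliding equations yields a diagram in which $x$ is sandwiched between $N$, $N^{\dagger}$ and a 2-morphism $W$, $W^{\dagger}$ built from $\sigma$; this is the decomposition~\eqref{eq:thm1}. Finally one checks that $W$ may be taken isometric: here one uses the \emph{full} trace-preserving-preserving hypothesis on $\Stilde$ (equivalently the condition displayed after Definition~\ref{def:detsupermapformal}), which, once the factorisation of $S$ is inserted, forces $W^{\dagger} \circ W = \id$.

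I expect the main obstacle to lie in that middle step: getting all variances and dualisations right so that the two Stinespring environments of $\Phi$ really do pair up under a single partial isometry, and then reorganising the resulting equation into the precise planar shape of~\eqref{eq:thm1}. A secondary subtlety is the isometry claim for $W$: Lemma~\ref{lem:3} alone is not enough, since it records only a consequence of trace-preservation after one partial trace, as one already sees in the toy case where $\Stilde$ turns states into states and $W$ must be the Stinespring dilation of a genuine channel; the hypothesis~\eqref{eq:tpp} is genuinely needed. Where convenient one can instead work with a minimal representation of $\Stilde$ throughout and use the minimality criterion~\eqref{eq:minimalitycond} to upgrade $\sigma$ to an isometry.
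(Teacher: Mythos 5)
Your overall route is the paper's: both arguments compare two Stinespring dilations of the single CP map $\Tr_{\Kout}\circ\Stilde = \mathcal{N}\circ\Tr_{\Hout}$ supplied by Lemma~\ref{lem:3} --- one obtained from a dilation of $\Stilde$ by bending the $\Kout$-leg into the environment, the other built from the minimal dilation $N$ of $\mathcal{N}$ together with a cup on $\Hout$ --- and then invoke the uniqueness clause of Theorem~\ref{thm:stinespring} to relate them by a 2-morphism $W$ on the environment, which after transposing the $\Kout$ wire yields the factorisation of $S$ and hence~\eqref{eq:thm1}.

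The one place where your justification diverges from the paper's, and where there is a genuine gap, is the isometry of $W$. You offer two mechanisms: (i) that the hypothesis~\eqref{eq:tpp}, reinserted after the factorisation, forces $W^{\dagger}\circ W = \id$; and (ii) that one can take a minimal representation of $\Stilde$ and appeal to~\eqref{eq:minimalitycond}. Neither is quite right. For (i), the hypothesis~\eqref{eq:tpp} has already been spent: given complete positivity of $\Stilde$, it is equivalent to the conclusion of Lemma~\ref{lem:3} with $\mathcal{N}$ unital, so it carries no further information with which to constrain $W$; your side remark that Lemma~\ref{lem:3} ``records only a consequence of trace-preservation after one partial trace'' is therefore misleading --- that lemma is the full strength of the hypothesis. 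For (ii), minimality of a representation of $\Stilde$ is not the relevant condition: a minimal dilation of $\Stilde$ need not induce a minimal dilation of $\Tr_{\Kout}\circ\Stilde$ once the $\Kout$-leg is bent into the environment. What the paper actually does is verify, via the invertibility criterion~\eqref{eq:minimalitycond}, that the \emph{other} representation --- the one built from $N$ and the cup on $\Hout$ --- is a \emph{minimal} representation of $\Tr_{\Kout}\circ\Stilde$; this follows from minimality of $N$ for $\mathcal{N}$. The uniqueness clause of Theorem~\ref{thm:stinespring} then states that a minimal representation is related to \emph{every} other representation by an isometry on the environment, so $W$ is an isometry automatically, with no further appeal to~\eqref{eq:tpp}, no need for the $S$-representation to be minimal, and no intermediate stage at which $W$ is merely a partial isometry in need of upgrading.
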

\begin{proof}
Looking back at Lemma~\ref{lem:3}, and in particular at~\eqref{eq:lem3}, we see two equal CP maps $\End{\Hout \boxtimes \widehat{\Hin}} \to \End{\widehat{\Kin}}$, whose dilations are as follows:
\begin{align}\label{eq:dilationcompare}
\includegraphics[valign=c,scale=1]{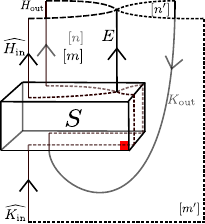},
&&
\includegraphics[valign=c,scale=1]{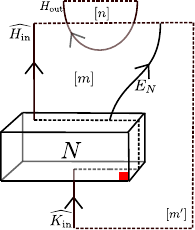}
\end{align}
The dilation on the right is minimal. We can see this by the invertibility condition for minimality~\eqref{eq:minimalitycond}; the dilation on the right is minimal iff 
\begin{align*}
\includegraphics[valign=c,scale=1]{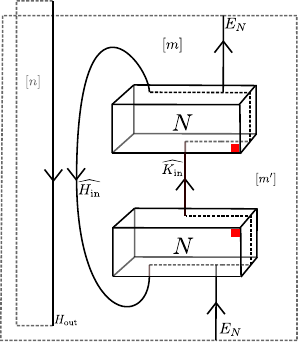}
\end{align*}
is invertible, which follows immediately from the fact that $N$ is a minimal dilation.

By Theorem~\ref{thm:stinespring}, the dilation on the left of~\eqref{eq:dilationcompare} is therefore related to the dilation on the right by an isometry $W$ on the environment:
\begin{align*}
\includegraphics[valign=c,scale=1]{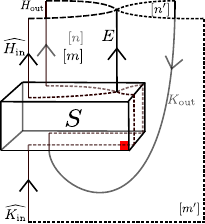}
~~=~~
\includegraphics[valign=c,scale=1]{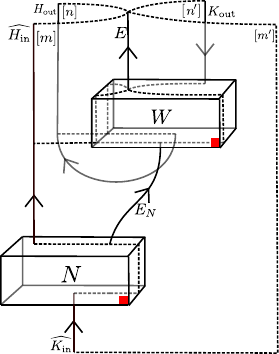}
\end{align*}
Transposing the $\Kout$ wire, we obtain the following expression for $S$:
\begin{align*}
\includegraphics[valign=c,scale=1]{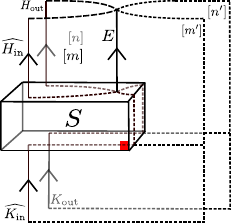}
~~=~~
\includegraphics[valign=c,scale=1]{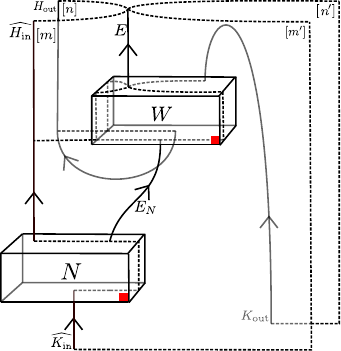}
\end{align*}
Substitution into the LHS of~\eqref{eq:thm1} gives the result.
\end{proof}
\noindent
\paragraph{Final argument.} We now explain how to deduce the realisation theorem  stated in the introduction (Theorem~\ref{thm:mainintro}) from Theorem~\ref{thm:1}. We first input the definition of the 1-morphisms $\widehat{\Hin}$ and $\widehat{\Kin}$, and the Choi operator $\widetilde{\mathcal{F}}$ of a channel $\mathcal{F}: B(\Hin) \to B(\Hout)$ with dilation $(E_F,F)$ (this was stated in~\eqref{eq:choi}), to expand the RHS of~\eqref{eq:thm1}:
\begin{align}\label{eq:derive1}
\includegraphics[valign=c,scale=.95]{graphics/theorem1/statement2.pdf}~~=~~\includegraphics[valign=c,scale=.95]{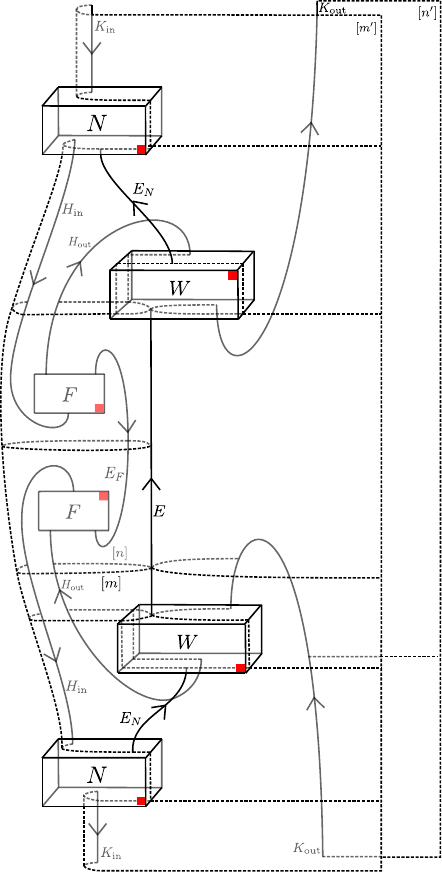}
\end{align}
Since the RHS of~\eqref{eq:derive1} is symmetric under reflection across a horizontal axis, we just focus on the bottom half; we will rewrite the top half in the same way. Using the sliding equations~\eqref{eq:sliding} we have the following series of equalities:
\begin{align*}
    \includegraphics[valign=c,scale=.95]{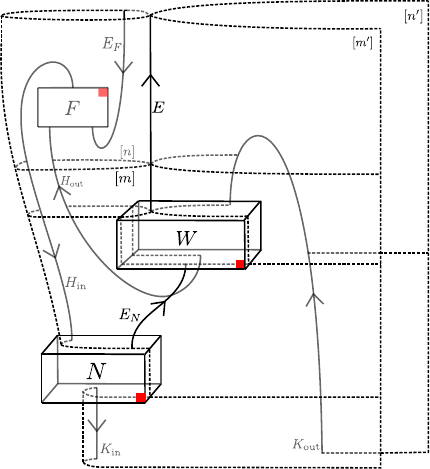}
    ~~=~~
    \includegraphics[valign=c,scale=.95]{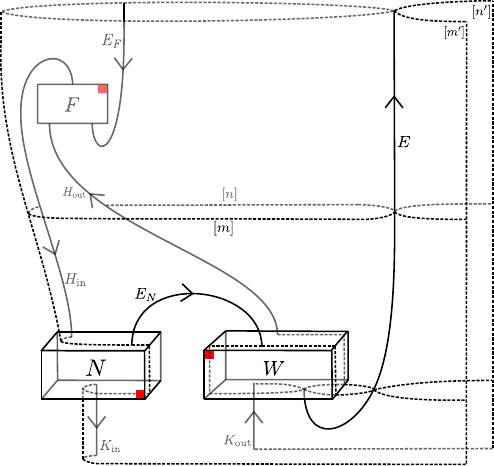}
    \end{align*}
\begin{align*}
   =~~ \includegraphics[valign=c,scale=.95]{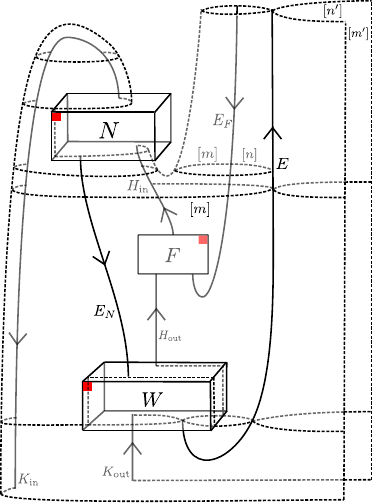}
\end{align*}
We input this rewrite into the RHS of~\eqref{eq:derive1} and then obtain the following additional equality:
\begin{align*}
    \includegraphics[valign=c,scale=.95]{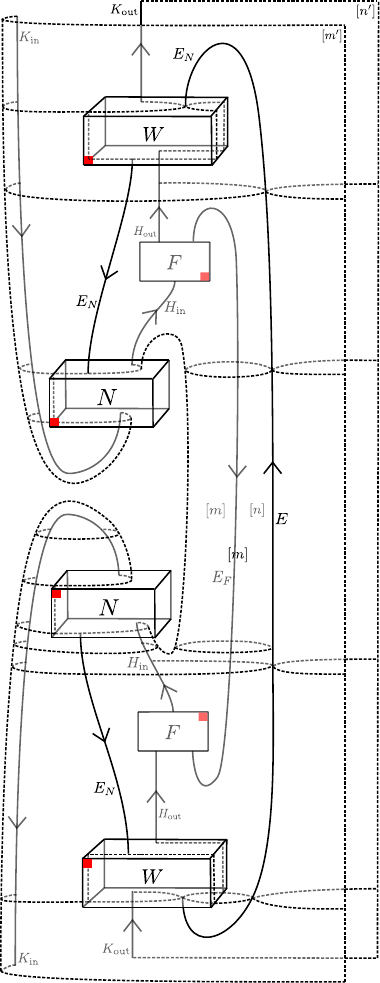}
    ~~=~~
\includegraphics[valign=c,scale=.95]{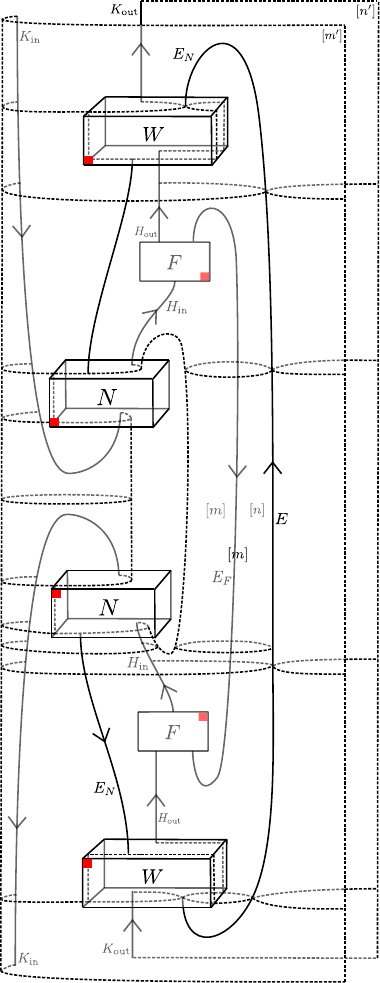}    
\end{align*}
This equality holds because the cup and cap for $B_{[m']}$ in the centre left of the diagram belong to the same planar region, so they can be fused. Now we recall the method given in Remark~\ref{rem:recoverchan} for recovering the dilation of a channel from its Choi operator. Following this procedure up to~\eqref{eq:recovertechnique}, we obtain the following diagram:
\begin{align*}
\includegraphics[valign=c,scale=.95]{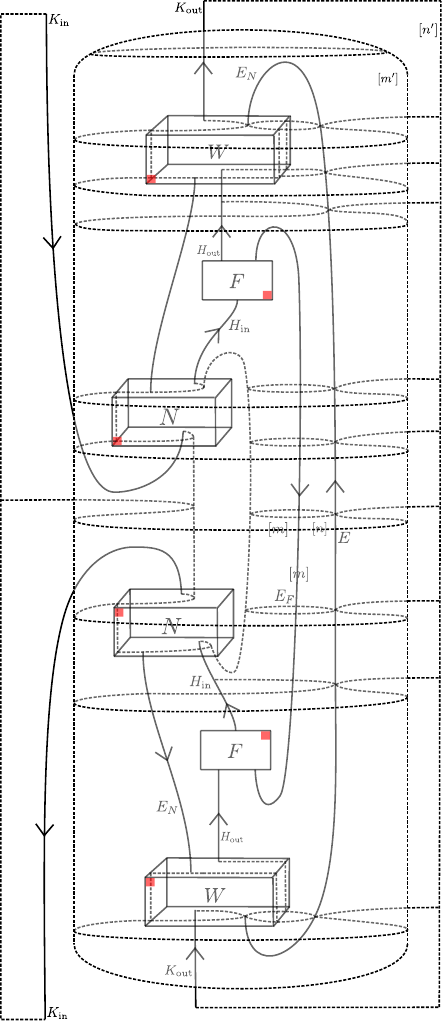}    
\end{align*}
Finally, recalling~\eqref{eq:cusps} we insert a cusp and its inverse in the centre left of the diagram:
\begin{align*}
\includegraphics[valign=c,scale=.95]{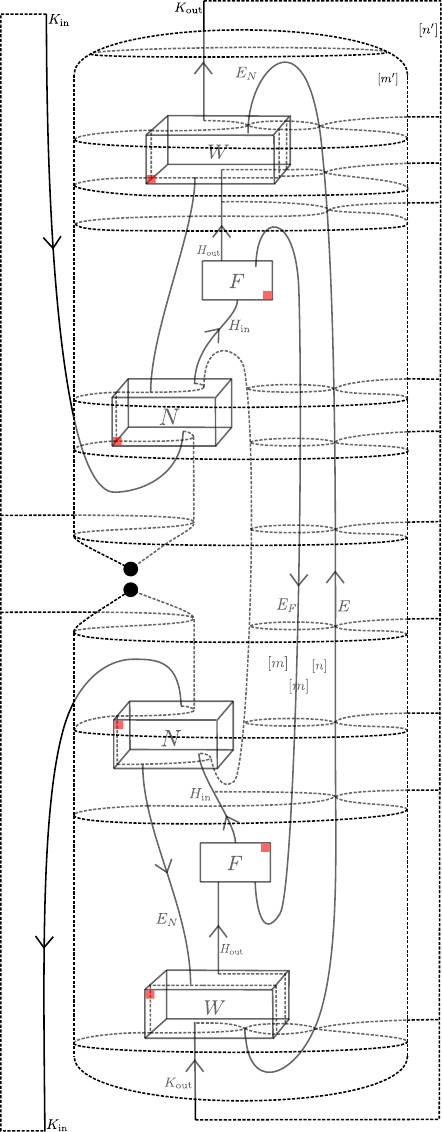}
\end{align*}
This diagram is analogous to the RHS of~\eqref{eq:recoverv}, where the 2-morphism $V$ is defined as follows:
\begin{align*}
\includegraphics[valign=c,scale=.95]{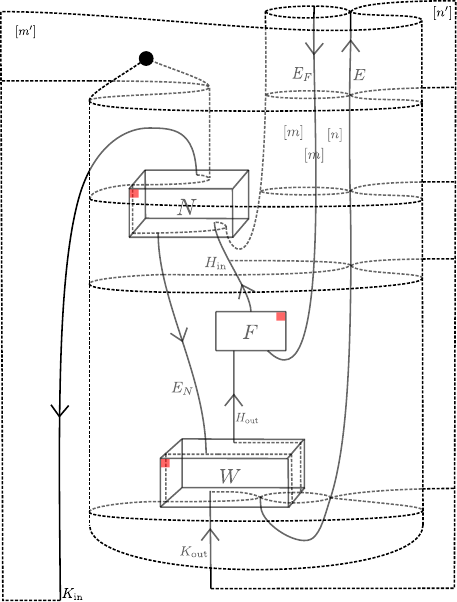}    
\end{align*}
Transposing the $\Kin$ wire as in~\eqref{eq:recovertranspose}, noting that $N^T,F^{\dagger},W^{T}$ are coisometries, and recalling the definition of a dual CP map (Definition~\ref{def:dualcp}), we see that $\mathcal{S}(\mathcal{F})$ is a composition of three channels, i.e. CP trace-preserving maps. 

The first of these channels has the following dilation:
\begin{align}\label{eq:circuitdil1}
\includegraphics[valign=c,scale=1]{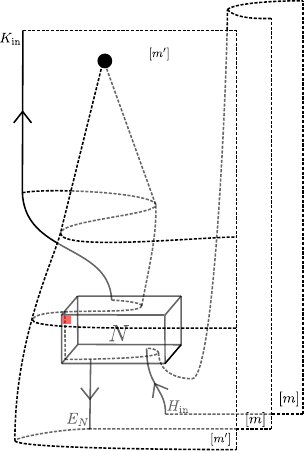}
\end{align}
Looking at the type of the dilation, we see this is a channel 
$$
\bigoplus_{k \in [m']} B(\Kin{}_{,k}) \to \bigoplus_{\substack{k' \in [m'] \\ i \in [m] \\ i' \in [m]}} B(E_{N,ik}) \otimes B(\Hin{}_{,i'}).
$$
In order to recover the circuit of Theorem~\ref{thm:mainintro}, we need to get rid of the $i,k$-dependence of the family of Hilbert spaces $\{E_{N,ik}\}$. For this, let $P$ be the Hilbert space from the set $\{E_{N,ik}\}_{i,k \in [m] \times [m']}$ of highest dimension, and pick isometries $\iota_{ik}: E_{N,ik} \hookrightarrow P$ for each $i,k$. This specifies an isometric 2-morphism $\iota: E_N \to \widetilde{E}_{N}$, where $\widetilde{E}_{N,ik}:= P$. Composing~\eqref{eq:circuitdil1} with the coisometry $\iota^{\dagger}$, we obtain a channel 
\begin{equation}\label{eq:circuitchan1}
\bigoplus_{k \in [m']} B(\Kin{}_{,i}) \to \bigoplus_{\substack{k' \in [m'] \\ i \in [m] \\ i' \in [m]}} B(P) \otimes B(\Hin{}_{,i'}).
\end{equation}
To take account of this added coisometry we will modify the last of the three composed channels later on; we will do this in such a way that the total channel $\mathcal{S}(\mathcal{F})$ stays the same. 

Looking at the connectivity of the regions in~\eqref{eq:circuitdil1}, we see that the index $k \in [m']$ in the target is constrained to be the same as the index $k' \in [m']$ in the source; moreover, the index $i \in [m]$ is constrained to be the same as the index $i' \in [m]$. We thus obtain a channel
$$
\mathcal{E}: \bigoplus_{k \in [m']} B(\Kin{}_{,i}) \to \bigoplus_{\substack{i \in [m]}} B(P) \otimes B(\Hin{}_{,i}).
$$
from which the channel~\eqref{eq:circuitchan1} can be recovered by copying of classical information beforehand and afterwards, as in the circuit~\eqref{eq:realisationcircuit}.

The second of the channels defining $\mathcal{S}(\mathcal{F})$ has the following dilation:
\begin{align*}
    \includegraphics[valign=c,scale=1]{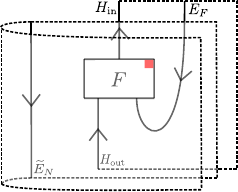}
\end{align*}
This is the dilation of the channel 
$$
\mathcal{F}: \bigoplus_{i \in [m]} B(\Hin{}_{,i}) \to \bigoplus_{j \in [n]} B(\Hout{}_{,j}), 
$$
as depicted in the circuit diagram~\eqref{eq:realisationcircuit}.

The third and final channel defining $\mathcal{S}(\mathcal{F})$ has the following dilation:
\begin{align}\label{eq:circuitchan3}
\includegraphics[valign=c,scale=1]{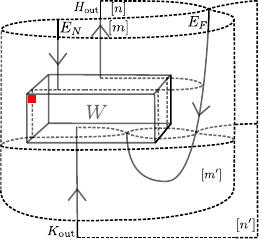}
\end{align}
We need to deal with the isometry $\iota$ we added earlier, because the dilation~\eqref{eq:circuitchan3} has input wire $E_N$ rather than $\tilde{E}_N$. First (recalling the definition of the direct sum of 1-morphisms~\eqref{eq:directsum1morphisms}) we define a 1-morphism $Z: [n] \boxtimes [m] \to [n'] \boxtimes [m']$, as follows:
\begin{align*}
\includegraphics[valign=c,scale=1]{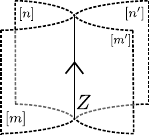}~~:=~~
\includegraphics[valign=c,scale=1]{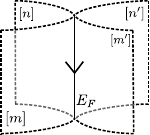}
~~\bigoplus~~
\includegraphics[valign=c,scale=1]{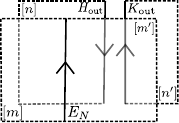}
\end{align*}
By definition of the direct sum, we have isometries
\begin{align*}
\includegraphics[valign=c,scale=1]{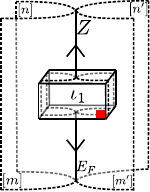}
&&
\includegraphics[valign=c,scale=1]{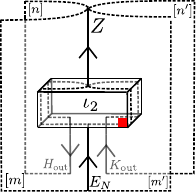}
\end{align*}
Let $E_{N,ik}^{\bot} \subset P$ be the orthogonal complement of $\iota_{ik}(E_{N,ik}) \subset P$; this defines a 1-morphism $E_N^{\bot}: [m'] \to [m]$, and we can now (recalling the definition of the direct sum of 2-morphisms~\eqref{eq:directsum2morphisms}) define the following dilation 
\begin{align}\label{eq:finaldil}
\includegraphics[valign=c,scale=1]{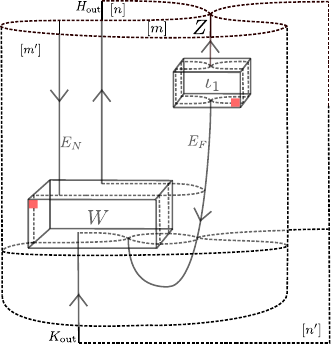}
~~ \bigoplus ~~
\includegraphics[valign=c,scale=1]{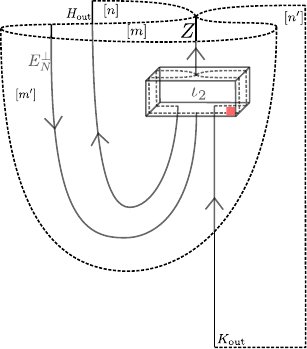}
\end{align}
of a completely positive map whose type (by distributivity of the direct sum of Hilbert spaces over the tensor product) is
$$
\bigoplus_{\substack{k \in [m'] \\ j \in [n] \\ i \in [m]}} B(P) \otimes B(\Hout{}_{,j}) \to \bigoplus_{l \in [n']}B(\Kout{}_{,l}).
$$
This is precisely the type of the last instrument in the circuit diagram~\eqref{eq:realisationcircuit}. It can readily be checked that this CP map is trace preserving; moreover, the overall channel $\mathcal{S}(\mathcal{F})$ is preserved by this modification of the last channel together with the inclusion of the coisometry $\iota^{\dagger}$, since the second factor of~\eqref{eq:finaldil} is annihilated by $\iota^{\dagger}$.

To finish the proof we need to derive the dimensional limit on the Hilbert space $P$. For this we need simply note the minimality condition, namely invertibility of  the 2-morphism~\eqref{eq:minimalitycond}: since $(N,E_N)$ is a minimal Stinespring representation, this implies that the following 2-morphism has a left inverse:
\begin{align*}
\includegraphics[valign=c,scale=1]{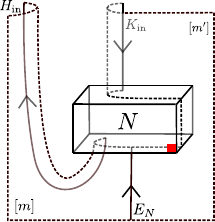}
\end{align*}
Considering this 2-morphism as an indexed family of linear maps, this implies that 
$$
\dim(E_{N,ik}) \leq \dim(\Hin{}_{,i})\dim(\Kin{}_{,k})
$$
for all $i,k$, which, given $\dim(P) = \max_{i,k}\left(\dim(E_{N,ik})\right)$, yields the result. This completes the proof of Theorem~\ref{thm:mainintro}. 
\qed

\end{document}